\theoremstyle{thmstyleone}%
\newtheorem{theorem}{Theorem}
\theoremstyle{thmstyletwo}%
\theoremstyle{thmstylethree}%
\newtheorem{definition}{Definition}%
\begin{document}

\title[Article Title]{Two-stage space construction for real-time modeling of distributed parameter systems under sparse sensing}


\author[1,2,4]{\fnm{Peng} \sur{Wei}}\email{pengwei7-c@my.cityu.edu.hk}


\affil[1]{\orgdiv{School of Automation}, \orgname{Wuhan University of Technology}, \orgaddress{\street{No.122 Luoshi Road}, \city{Wuhan}, \postcode{430070}, \state{Hubei}, \country{China}}}

\affil[2]{\orgdiv{Hubei Key Laboratory of Advanced Technology for Automotive Components}, \orgname{Wuhan University of Technology}, \orgaddress{\city{Wuhan}, \postcode{430070}, \country{China}}}


\affil[4]{\orgdiv{Department of Systems Engineering}, \orgname{City University of Hong Kong}, \orgaddress{\street{Tat Chee Avenue}, \city{Kowloon}, \postcode{999077}, \state{Hong Kong SAR}, \country{China}}}



\abstract{Numerous industrial processes can be defined using distributed parameter systems (DPSs). This study introduces a two-stage spatial construction approach for real-time modeling of DPSs in cases of limited sensors. Initially, a discrete space-completion approach is created to recuperate the spatiotemporal patterns of non-monitored locations under sparse sensing. The high-dimensional space construction method is employed to derive continuous spatial basis functions (SBFs). The identification and adjustment of the nonlinear temporal model are carried out via the long short-term memory (LSTM) neural network. Eventually, the amalgamation of the derived SBFs and temporal model results in a spatially continuous model. The use of a cubic B-spline surface is validated as an effective solution for optimizing space construction in the sense of least squares approximation. Experimental tests conducted on a pouch-type Li-ion battery demonstrate the efficacy of the proposed modeling technique under sparse sensing.  This work highlights the promise of sparse sensors in real-time full-space modeling for large-scale battery energy storage systems.}

\keywords{Distributed parameter system (DPS), Li-ion battery, data-driven modeling, space construction, sparse sensing}



\maketitle

\section{Introduction}\label{sec1}


Distributed parameter systems (DPSs), known as spatiotemporal dynamics systems, usually used to describe industrial processes, such as chemical catalytic processes \cite{christofides2002nonlinear}, battery thermal processes \cite{wei2023spatio}, and ultrasonic propagation processes \cite{vanhille2008nonlinear}. The development of precise online models serves as the foundation for rapid fault diagnosis and real-time control \cite{dai2013model}. However, the intricate spatiotemporal properties and complex nonlinear couplings among different spatial dimensions make it challenging to establish accurate online models for DPSs, particularly in situations involving sparse sensing.


DPS modeling methods can be categorized into first-principle methods and data-driven methods. First-principle methods, such as the finite element method \cite{reddy2019introduction}, finite difference method \cite{ozicsik2017finite, sattarzadeh2021real}, and spectral method \cite{deng2005spectral}, employ accurate system partial differential equations (PDEs) to derive finite-order ODE models for approximating the original DPS. For instance, in the work by Deng \cite{deng2005spectral}, a spectral-approximation-based reduced model was developed for the spatiotemporal modeling of two-dimensional DPSs. However, the practical challenge lies in obtaining precise governing PDEs and their corresponding boundary conditions, which can limit the application of these methods in industrial processes.


Data-driven methods, such as the Karhunen-Loeve (KL) method \cite{baker2000finite, chen2019dimension}, and its variations \cite{wang2018sliding, wang2018incremental}, utilize data collected by multiple sensors distributed in the spatial domain to model the DPS. For instance, a sliding window-based method was proposed in \cite{wang2018sliding} for the online modeling of DPSs. Furthermore, to reduce the computational complexity of the online model, an incremental learning algorithm was introduced in \cite{wang2018incremental} to update spatial basis functions (SBFs) more efficiently. These approaches enable DPS modeling without the reliance on precise system equations, making them more prevalent in practical applications. However, data-based methods necessitate a substantial number of sensors for accurate modeling, posing challenges for cost and system complexity in industrial applications. Addressing this challenge, a KL-based method was suggested in \cite{chen2020spatiotemporal} to model DPS under sparse sensing. Nonetheless, this method still cannot achieve full-space modeling due to the spatially discrete nature of measurement data, which is inherent in traditional data-driven modeling techniques. Attaining continuous SBFs might enable the realization of full-space prediction even under sparse sensing conditions.


When it comes to online modeling of DPSs, continuous updating of the temporal model often proves to be time-consuming \cite{wang2018sliding}. On one hand, minimizing the frequency of model updates is necessary to enhance modeling efficiency, while on the other hand, increasing the frequency of updates is essential for improving modeling accuracy. The long short-term memory (LSTM) neural network \cite{ojo2020neural, li2020novel}, a form of recurrent neural network, exhibits robust nonlinear learning capabilities for sequence-type data. During online prediction, the LSTM neural network can continuously update the model using the most recent data without necessitating retraining \cite{zhang2022data}. Consequently, the implementation of the LSTM neural network may facilitate the updating of the temporal model in DPS modeling.

In this work, we propose a two-stage space construction method for spatially continuous modeling of DPSs under sparse sensing. The proposed modeling framework comprises discrete space completion, continuous space construction, LSTM-based nonlinear learning, and space-time synthesis. Leveraging the suggested two-stage space construction approach enables the extraction of comprehensive space information even when dealing with sparse sensing. With the assistance of the LSTM neural network, the proposed spatiotemporal model can be continuously updated online. Furthermore, we analyze the influence of sensor quantity on modeling accuracy and examine the impact of various sensing schemes on modeling performance. The findings demonstrate that the proposed framework achieves full-space modeling with a training RMSE of 0.0457 and a testing RMSE of 0.0692 for pouch-type Li-ion batteries, employing only two online sensors. Comparative analysis reveals the superior performance of the proposed modeling method under identical sensing conditions. These outcomes underscore the potential of employing sparse sensors for full-space modeling of large-scale battery energy storage systems.



\section{Results}\label{sec2}
\subsection{Framework overview}

Lithium-ion (Li-ion) batteries serve as prevalent power sources for electric vehicles and portable devices \cite{zhang2011review, xiong2020research}. The thermal process of pouch-type Li-ion batteries represents a typical distributed parameter system (DPS) \cite{wei2023spatio}. To validate the efficacy of the proposed modeling method, the battery cell depicted in Fig. \ref{fig:conception} is employed. The proposed approach, as illustrated in Fig. \ref{fig:framework}, primarily comprises discrete space completion, continuous space construction, LSTM-based nonlinear learning, and space-time synthesis for prediction. During the offline stage, the two-stage space construction is formulated under conditions of complete sensing. Subsequently, during the online stage, the entire temporal coefficients can be acquired through iterative computation under sparse sensing. The complete space-time prediction is achieved by synthesizing the spatially continuous SBFs $ \psi(x,y) $ and the corresponding temporal coefficients $ \hat{a}(t) $ derived using LSTM neural network.

\begin{figure}[htbp] 
	\centering
	\includegraphics[width=0.90\textwidth]{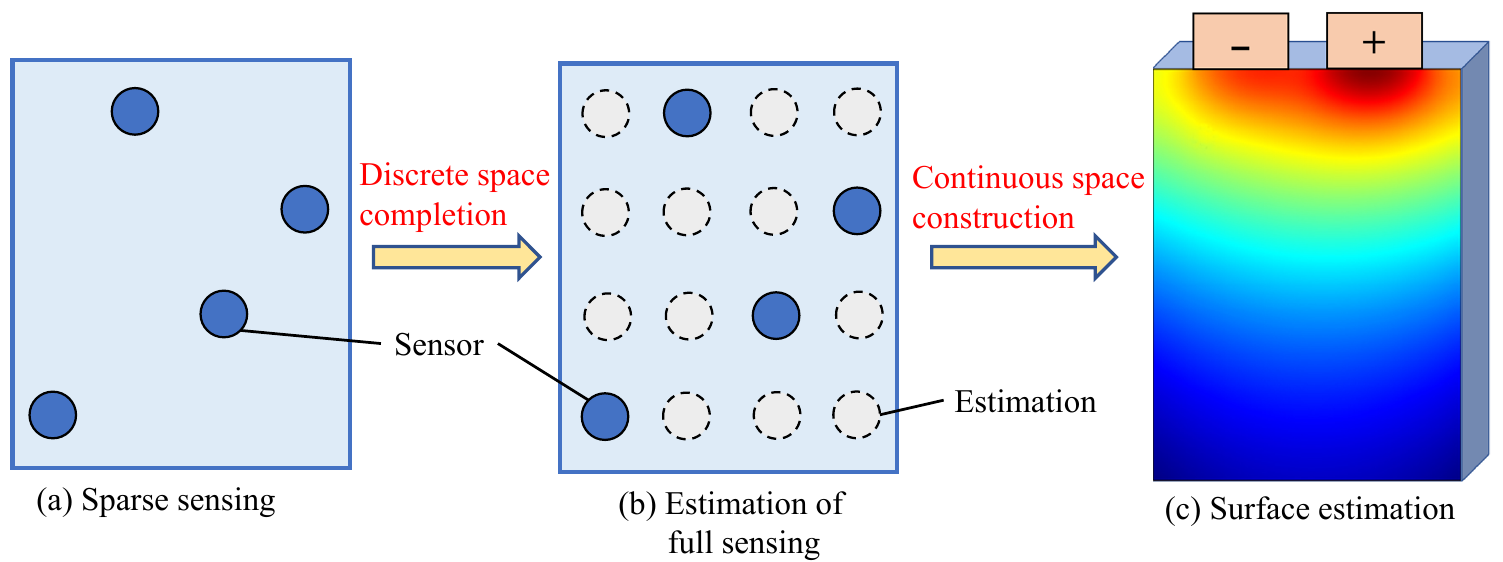}
	\caption{Concept illustration of the proposed discrete space completion and continuous space construction. (\textbf{a}) Sensor distribution under sparse sensing. (\textbf{b}) Temperature estimation of all sensors using the proposed discrete space completion method with dotted circles representing virtual sensors. (\textbf{c}) Temperature estimation of the full surface using the proposed continuous space construction method.}
	\label{fig:conception}	
\end{figure}

\begin{figure}[htbp] 
	\centering
	\includegraphics[width=0.65\textwidth]{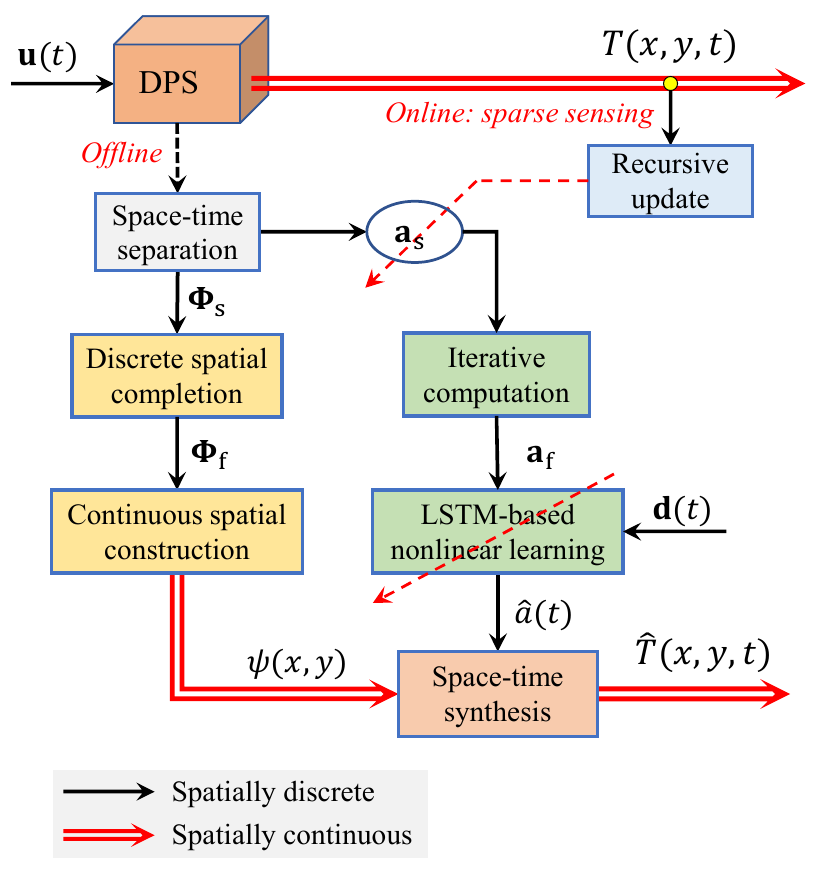}
	\caption{Framework of the proposed two-stage space construction modeling method. $ \mathbf{u}(t) $ denotes the system input vector of the DPS. $ T(x,y,t) $ and $ \widehat{T}(x,y,t) $ represent the actual spatiotemporal output and the predicted output of the DPS. During the offline stage, the collected sparse data $ \mathbf{T}_\text{s}$ are used to extract sparse spatial basis function (SBF) matrix $ \mathbf{\Phi}_\text{s} $ and corresponding temporal coefficient matrix $ \mathbf{a}_\text{s} $ by space-time separation. The full SBF matrix $ \mathbf{\Phi}_\text{f} $ is revealed by the proposed discrete spatial completion algorithm. At the same time, the spatially continuous SBF $ \psi(x,y) $ is derived by the proposed spatial construction method. During the online procedure, the sparse temporal coefficient matrix $ \mathbf{a}_\text{s}$ will be updated and delivered to the iterative computation module for the full temporal coefficient matrix $ \mathbf{a}_\text{f} $ derivation. The future temporal coefficient $ \widehat{a}(t) $ will be identified by the nonlinear learning algorithm. Finally, the spatially continuous modeling under sparse sensing can be achieved by the space-time synthesis.}
	\label{fig:framework}	
\end{figure}

\subsection{Data generation}

The experiment employs a 0.15$ \times $0.20 pouch-type Li-ion battery cell. The nominal capacity, rated voltage, discharge cut-off voltage, maximum charge current, and maximum charge voltage of the battery cell are 20 Ah, 3.2 V, 2.0 V, 40.0 A, and 3.65 V, respectively. The experimental testing platform is illustrated in Figure \ref{fig:platform}. The thermal chamber serves as the controlled testing environment for the battery cell. The battery management system (BMS) is responsible for the synchronous collection and transmission of voltage, current, and temperature data to the host computer. Additionally, the battery test system (BTS) is utilized to administer various test current waveforms to the battery cell, following the control signal from the host computer.

\begin{figure}[htbp] 
	\centering
	\includegraphics[width=0.65\textwidth]{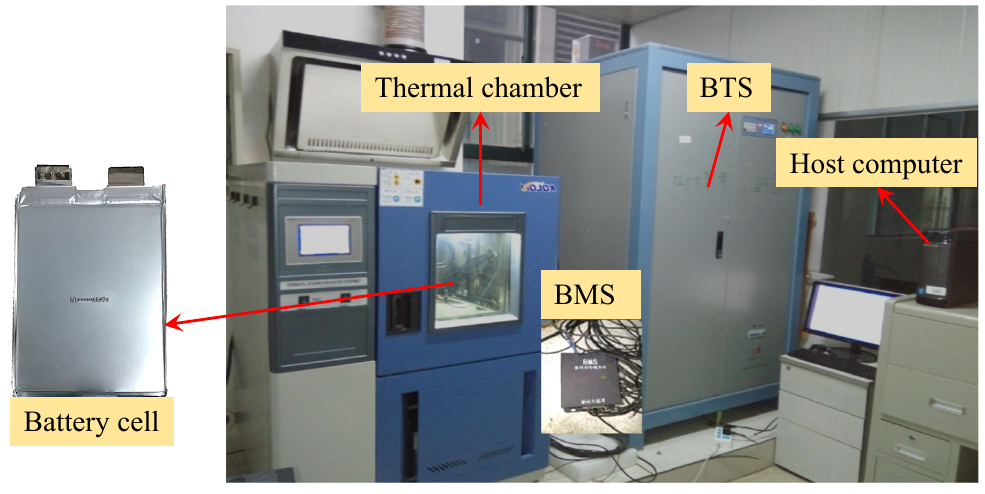}
	\caption{The experimental platform is comprised of several key components, including a thermal chamber, a battery management system (BMS), a battery test system (BTS), and a host computer. The battery under investigation is of the pouch-type cell, characterized by dimensions of 0.15 in width and 0.20 in length.}
	\label{fig:platform}	
\end{figure}

\begin{figure}[htbp] 
	\centering
	\includegraphics[width=0.65\textwidth]{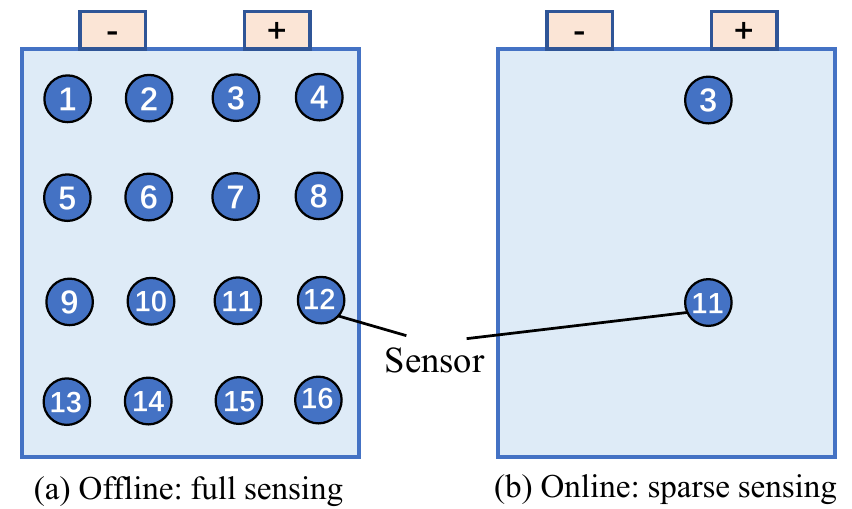}
	\caption{Sensor distribution in offline and online phases. (\textbf{a}) During the offline stage, the full sensing scheme is implemented, featuring a total of sixteen temperature sensors evenly distributed across the battery surface. (\textbf{b}) During the online procedure, the sparse sensing scheme is adopted with only two sensors deployed from the original sixteen ones.}
	\label{fig:sensing_diagram}	
\end{figure}


As shown in Fig. \ref{fig:sensing_diagram}, sixteen temperature sensors are evenly distributed on the battery surface during the offline stage, with each sensor's position marked with a corresponding serial number. During the online stage, data collection and spatial construction utilize only three of these sensors. The temperature sensors utilized are T-type thermocouples, while the environmental temperature remains stabilized at 25 $^{\circ}$C. The determination of the model order is executed using the energy ratio method \cite{li2010modeling}. The primary experimental parameters are itemized in Table \ref{tab:parameters}. The load current follows a ladder form, serving as the excitation for the battery cell, as shown in Fig. \ref{fig:current}.

\begin{table}[htbp] 
	\centering
	\caption{Main Experimental Parameters.}
	\label{tab:parameters}
	\begin{tabular}{cccc}
		\toprule
		Parameter  &  Value & Unit & Source \\
		\midrule
		Length of the battery      &     0.200  & m & Measured\\
		Width of the battery    &     0.150 & m & Measured\\
		Thickness of the battery    &     7e-3  & m & Measured\\
		Nominal capacity           &   20    & Ah & Specification\\
		Rated voltage            &   3.2    & V & Specification\\
		Ambient temperature        &  25    & $^{\circ}$C & Selected\\
		Full sensor number $ N_\text{f} $    &      16       & - & Selected \\
		Sparse sensor number $ N_\text{s} $    &      3       & - & Selected \\
		Model order $ n $	       & 	 2   & -	& Derived \\			
		\bottomrule
	\end{tabular}
\end{table}

\begin{figure}[htbp] 
	\centering
	\includegraphics[width=0.6\textwidth]{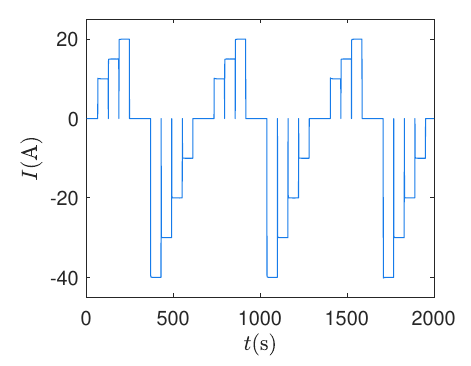}
	\caption{Load current profile from 0 to 2,000 s.}
	\label{fig:current}	
\end{figure}

\subsection{Experiment results}


As illustrated in Fig. \ref{fig:sensing_diagram}, during the offline training phase, numerous sensors can be uniformly arranged in the spatial domain (full sensing scheme). However, it is crucial to minimize dependence on sensors during the online testing phase. Hence, a sparse sensing scheme is defined, utilizing only a subset of sensors from the full sensing setup. To determine the optimal sensing scheme, we will investigate the modeling performance under varying sensor quantities and different sensor placements.

\begin{figure}[htbp]
	\centering
	\subfigure[One sensor]{
		\includegraphics[width=0.23\textwidth]{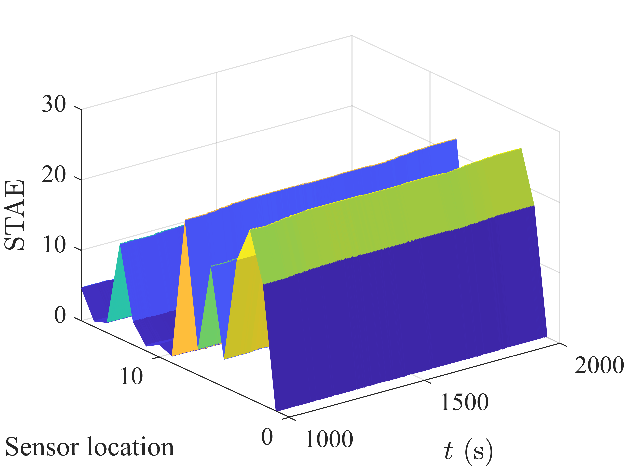}}
	\subfigure[Two sensors]{
		\includegraphics[width=0.23\textwidth]{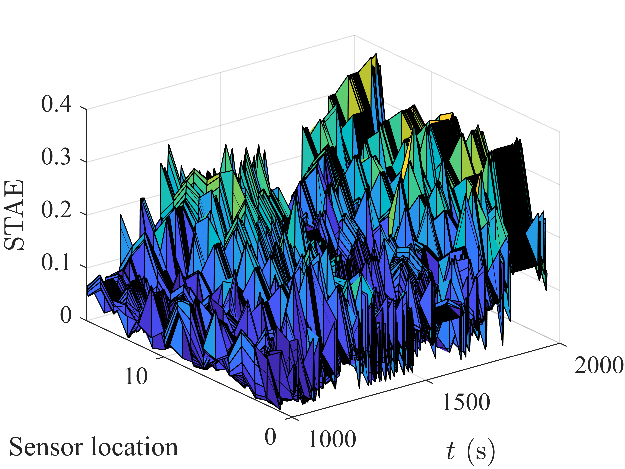}}
	\subfigure[Three sensors]{
		\includegraphics[width=0.23\textwidth]{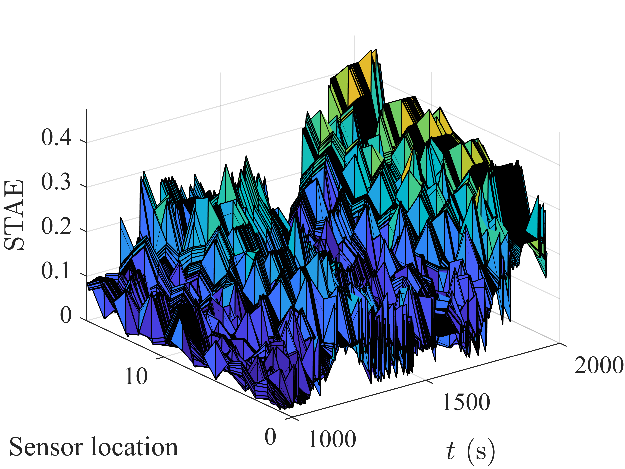}}
	\subfigure[Four sensors]{
		\includegraphics[width=0.23\textwidth]{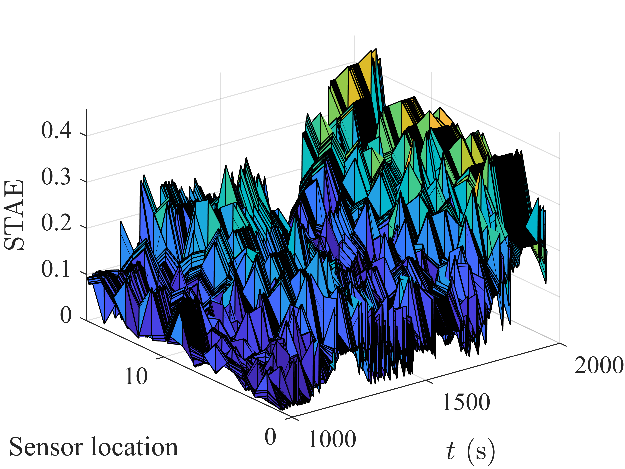}}
	\subfigure[Five sensor]{
		\includegraphics[width=0.23\textwidth]{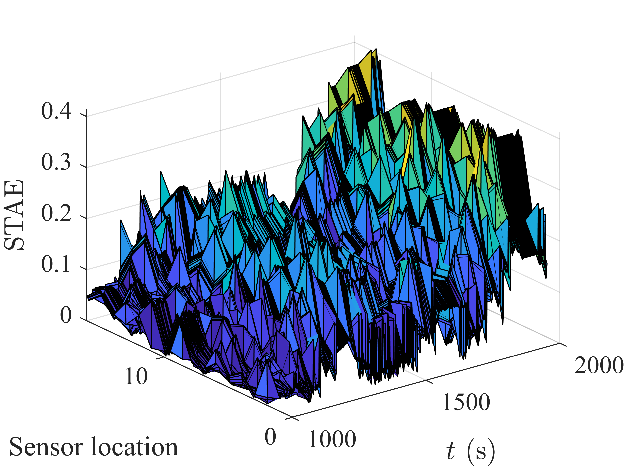}}
	\subfigure[Six sensors]{
		\includegraphics[width=0.23\textwidth]{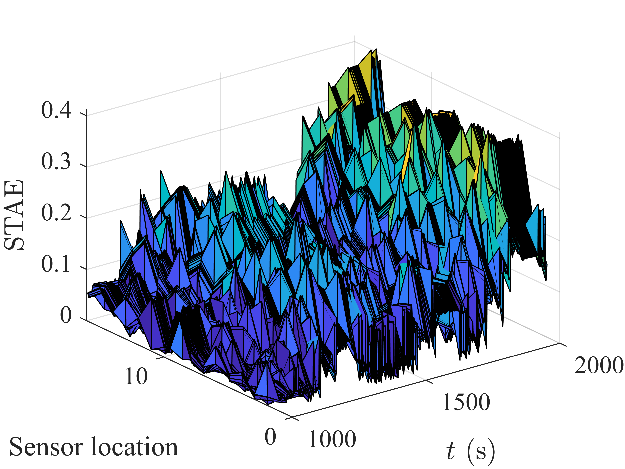}}
	\subfigure[Seven sensors]{
		\includegraphics[width=0.23\textwidth]{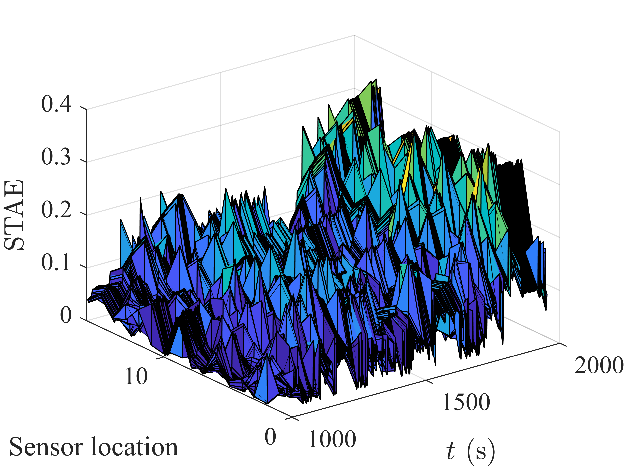}}
	\subfigure[Eight sensors]{
		\includegraphics[width=0.23\textwidth]{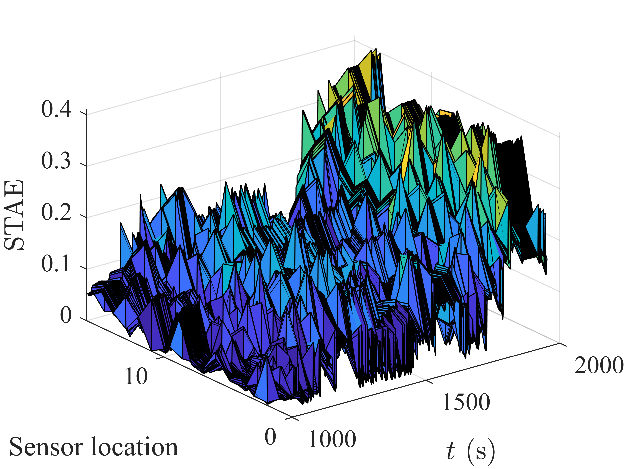}}
	\subfigure[Nine sensor]{
		\includegraphics[width=0.23\textwidth]{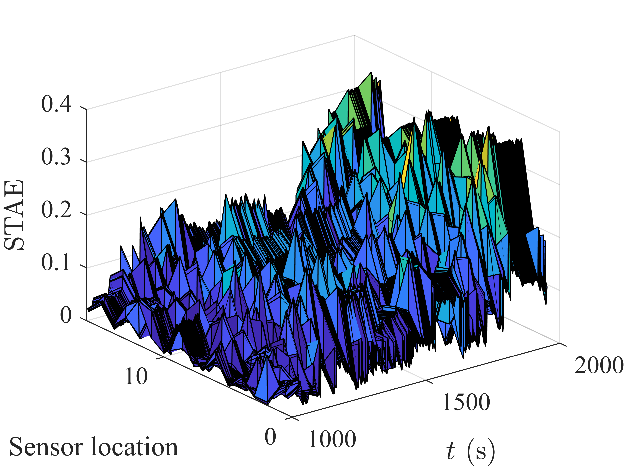}}
	\subfigure[Ten sensors]{
		\includegraphics[width=0.23\textwidth]{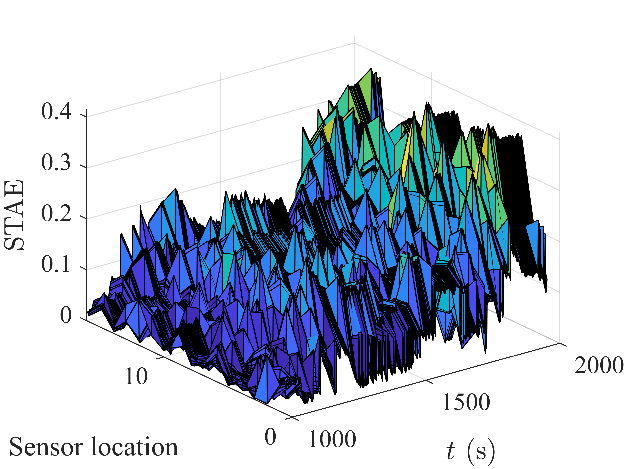}}
	\subfigure[11 sensors]{
		\includegraphics[width=0.23\textwidth]{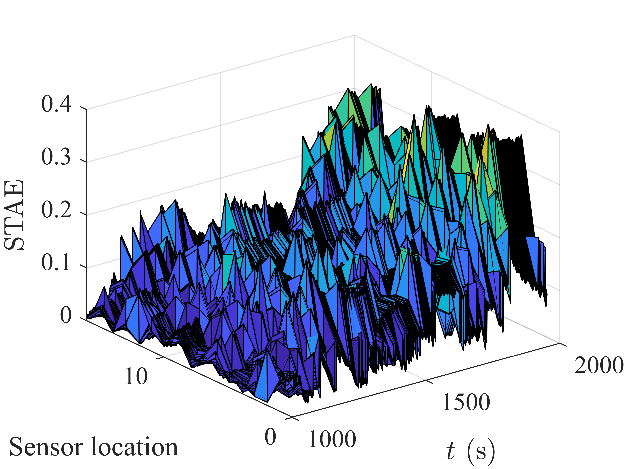}}
	\subfigure[12 sensors]{
		\includegraphics[width=0.23\textwidth]{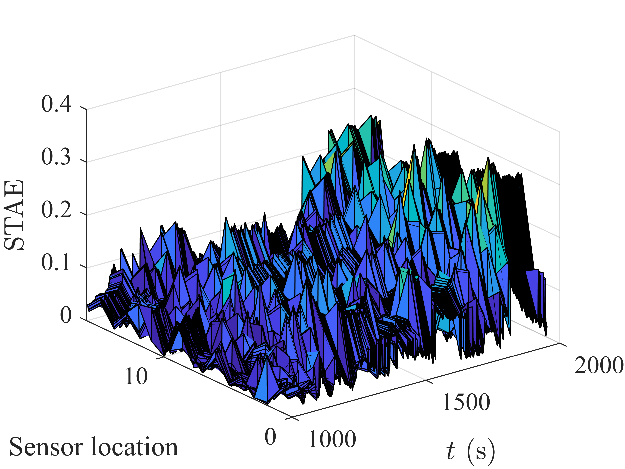}}
	\subfigure[13 sensor]{
		\includegraphics[width=0.23\textwidth]{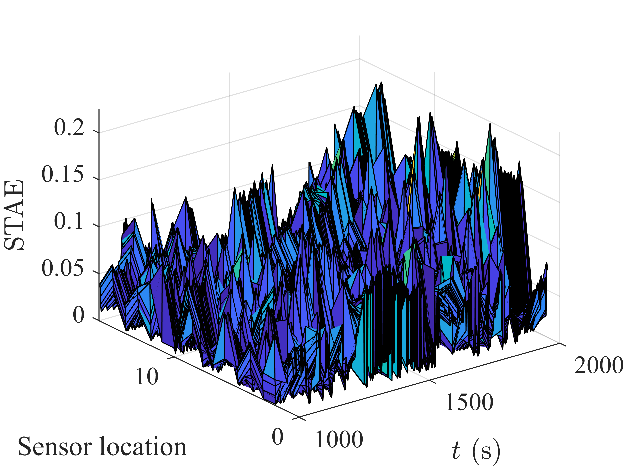}}
	\subfigure[14 sensors]{
		\includegraphics[width=0.23\textwidth]{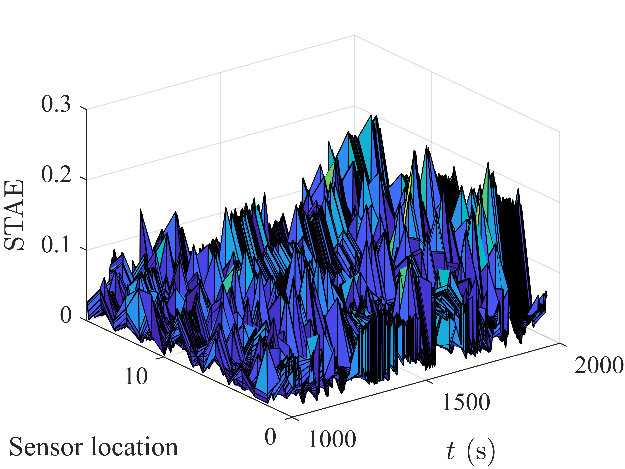}}
	\subfigure[15 sensors]{
		\includegraphics[width=0.23\textwidth]{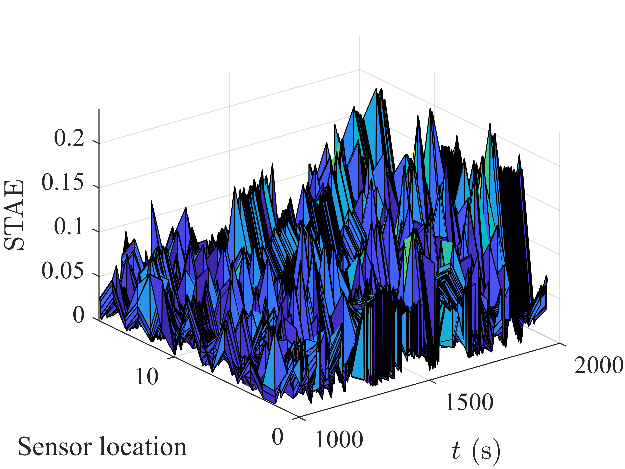}}
	\subfigure[16 sensors]{
		\includegraphics[width=0.23\textwidth]{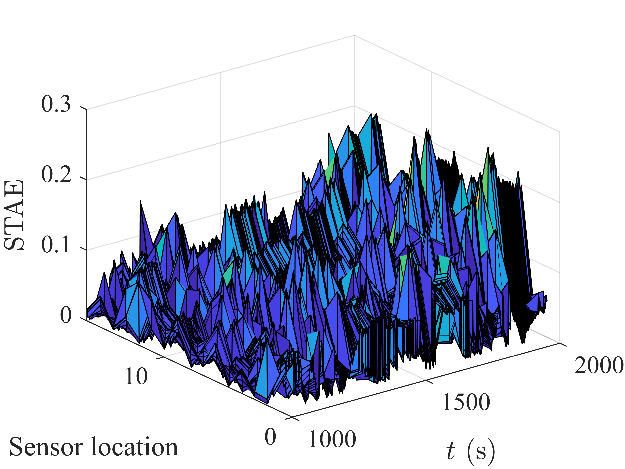}}
	\caption{Spatiotemporal absolute error (STAE) from 1,000 to 2,000 s under different number of sensors. In certain regions, the STAEs reach values as high as 20 under a single sensor configuration, indicating that the thermal dynamics of the battery cannot be adequately captured with only one sensor during the online phase. Interestingly, it is noteworthy that the STAEs between two and sixteen sensors do not exhibit significant variations.}
	\label{fig:RMSEs_different_sensor_number}
\end{figure}

The testing Spatiotemporal absolute errors (STAEs) under different number of sensors are presented in Fig. \ref{fig:RMSEs_different_sensor_number}, with STAE defined as in Equation (\ref{equ:STAE}). The specific sensor locations for corresponding sensor numbers are detailed in Table \ref{tab:different_sesnors_number}. The STAEs exhibit values of up to 20 in certain regions under the one-sensor condition, indicating that the battery thermal process cannot be effectively modeled with just one sensor during the online stage. Additionally, the STAEs between two and 16 sensors exhibit minimal variation, as depicted in Fig. \ref{fig:RMSEs_different_sensor_number}. For a clearer comparison of modeling performance under different sensor quantities, the training and testing RMSEs are presented in Table \ref{tab:different_sesnors_number} and Fig. \ref{fig:RMSE_comparison_different_number_sensors}. The RMSE is computed according to Equation (\ref{equ:RMSE}). The best performance is indicated in bold within Table \ref{tab:different_sesnors_number}. As illustrated in Fig. \ref{fig:RMSE_comparison_different_number_sensors} (a), the RMSE under one sensor significantly surpasses those under multiple sensors, thereby making the differences among the RMSEs under multiple sensors less discernible. Consequently, we plot an enlarged segment of the RMSEs under 2$\sim$16 sensors, as depicted in Fig. \ref{fig:RMSE_comparison_different_number_sensors} (b). The illustration highlights an overall decrease in both training and testing RMSEs as the number of sensors increases, aligning with common intuition. However, to minimize dependency on sensors during online modeling while retaining satisfactory modeling accuracy, we opt to utilize two sensors for the online modeling process.

\begin{figure}[htbp]
	\centering
	\subfigure[Panorama of 1$\sim$16 sensors]{
		\includegraphics[width=0.65\textwidth]{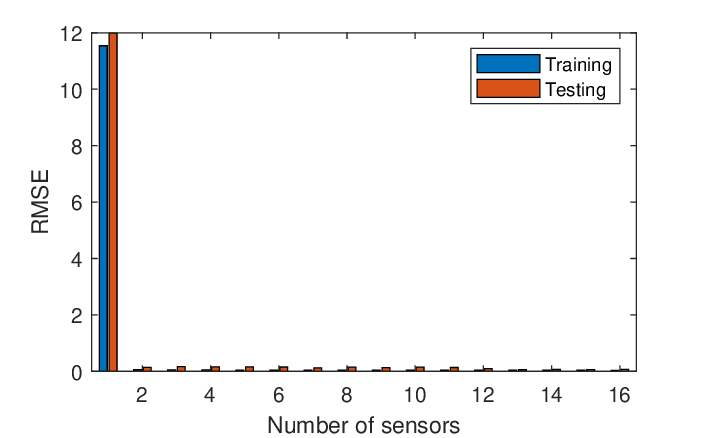}}
	\subfigure[Partial enlargement of 2$\sim$16 sensors]{
		\includegraphics[width=0.65\textwidth]{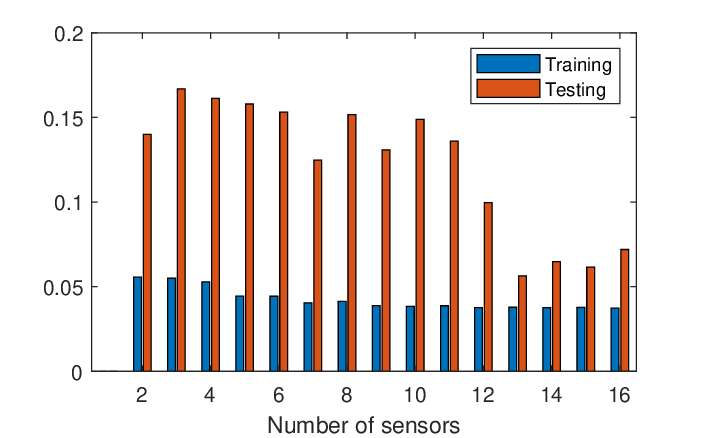}}
	\caption{Root mean square error (RMSE) comparisons under different sensors. (\textbf{a}) Training and testing RMSEs under different number of sensors. (b) Training and testing RMSEs under two to sixteen sensors. The RMSEs associated with a single sensor configuration are substantially greater than those observed with multiple sensors, resulting in less pronounced distinctions among RMSEs under various multi-sensor setups. Consequently, we have created a focused depiction of the RMSEs for configurations ranging from 2 to 16 sensors, as illustrated in Figure (b).}
	\label{fig:RMSE_comparison_different_number_sensors}
\end{figure}

\begin{table}[htbp] 
	\centering
	\caption{Modeling Performance Comparison with Different Number of Sensors.}
	\label{tab:different_sesnors_number}
	\begin{tabular} {cccc}
		\toprule
		Number of sensors  &  \makecell{$\mathbf{s}_\text{tag}$ (Sensor locations)} & Training RMSE &  Testing RMSE \\
		\midrule
		1 & $ [1]^\text{T} $  & 11.5410 & 11.9884 \\
		2 & $ [1,2]^\text{T} $  & 0.0556 & 0.1399 \\
		3 & $ [1,2,3]^\text{T} $  & 0.0550 & 0.1668 \\
		4 & $ [1,2,3,4]^\text{T} $ & 0.0528 & 0.1613 \\
		5 & $ [1,2,3,4,5]^\text{T} $ & 0.0444 & 0.1579 \\
		6 & $ [1,2,3,4,5,6]^\text{T} $  & 0.0443 & 0.1532 \\
		7 & $ [1,2,3,4,5,6,7]^\text{T} $ & 0.0404 & 0.1247 \\
		8 & $ [1,2,3,4,5,6,7,8]^\text{T} $ & 0.0413 & 0.1517 \\
		9 & $ [1,2,3,4,5,6,7,8,9]^\text{T} $ & 0.0389 & 0.1309 \\
		10& $ [1,2,3,4,5,6,7,8,9,10]^\text{T} $ & 0.0384	& 0.1489 \\	
		11& $ [1,2,3,4,5,6,7,8,9,10,11]^\text{T} $ & 0.0388	& 0.1359 \\
		12& $ [1,2,3,4,5,6,7,8,9,10,11,12]^\text{T} $ & 0.0377	& 0.0996 \\	
		13& $ [1,2,3,4,5,6,7,8,9,10,11,12,13]^\text{T} $ & 0.0379	& \textbf{0.0564} \\	
		14& $ [1,2,3,4,5,6,7,8,9,10,11,12,13,14]^\text{T} $ & 0.0377	& 0.0648 \\	
		15& $ [1,2,3,4,5,6,7,8,9,10,11,12,13,14,15]^\text{T} $ & 0.0377	& 0.0615 \\	
		16& $ [1,2,3,4,5,6,7,8,9,10,11,12,13,14,15,16]^\text{T} $ & \textbf{0.0374} & 0.0720 \\					
		\bottomrule
	\end{tabular}
	\begin{tablenotes}
		\item[] The best performance is marked in bold entity.
	\end{tablenotes}
\end{table}


Considering the non-uniform temperature distribution within the battery, the positioning of sensors significantly impacts the performance of the modeling method. To determine the optimal placement using two sensors, we analyze the STAEs under various sensing schemes, depicted in Figure \ref{fig:RMSEs_different_sensing_schemes}. Notably, due to the high and fluctuating temperatures near the positive terminal of the battery, sensor \#3 is consistently included in each sensing scheme owing to its proximity to the battery positive terminal. As illustrated in Fig. \ref{fig:RMSEs_different_sensing_schemes}, identifying the optimal sensing scheme proves challenging, as the differences among the various STAEs are not prominently distinct.


Table \ref{tab:sensing_schemes} provides a comparison of the modeling performances under different sensing schemes, with the best performance indicated in bold. The most optimal training performance (lowest RMSE) is observed under the 11th sensing scheme with $ \mathbf{s}\text{tag} = [3,12]^\text{T} $, while the best testing performance is achieved under the 10th sensing scheme with $ \mathbf{s}\text{tag} = [3,11]^\text{T} $. The results in Table \ref{tab:sensing_schemes} are also presented in Fig. \ref{fig:RMSE_comparison_different_sensing_schemes} to facilitate a more visual observation of the RMSE trends under different sensing schemes. It is apparent that despite the utilization of two sensors, the influence of various sensing schemes on RMSEs remains noticeable. With the exception of the 7th scheme, all test RMSEs are higher than their corresponding training RMSEs. Therefore, during the online procedure, the testing RMSE serves as the primary criterion for sensor location selection. Consequently, the subsequent experiments are conducted under the 10th sensing scheme, attributed to its lowest testing RMSE.

\begin{table}[htbp] 
	\centering
	\caption{Modeling Performance Comparison with Two Sensors under Different Sensing Schemes.}
	\label{tab:sensing_schemes}
	\begin{tabular} {cccc}
		\toprule
		Scheme  &  \makecell{$\mathbf{s}_\text{tag}$ (Sensor locations)} & Training RMSE &  Testing RMSE \\
		\midrule
		1 & $ [1,3]^\text{T} $  & 0.0562 & 0.1673 \\
		2 & $ [2,3]^\text{T} $  & 0.1379 & 0.2124 \\
		3 & $ [3,4]^\text{T} $  & 0.1903 & 0.2051 \\
		4 & $ [3,5]^\text{T} $ & 0.0463 & 0.1002 \\
		5 & $ [3,6]^\text{T} $ & 0.1099 & 0.1372 \\
		6 & $ [3,7]^\text{T} $  & 0.0476 & 0.1094 \\
		7 & $ [3,8]^\text{T} $ & 0.3511 & 0.1865 \\
		8 & $ [3,9]^\text{T} $ & 0.0475 & 0.1133 \\
		9 & $ [3,10]^\text{T} $ & 0.0458 & 0.0828 \\
		10& $ [3,11]^\text{T} $ & 0.0457	& \textbf{0.0692} \\	
		11& $ [3,12]^\text{T} $ & \textbf{0.0435}	& 0.0863 \\
		12& $ [3,13]^\text{T} $ & 0.0440	& 0.1123 \\	
		13& $ [3,14]^\text{T} $ & 0.0449	& 0.0772 \\	
		14& $ [3,15]^\text{T} $ & 0.0494	& 0.0884 \\	
		15& $ [3,16]^\text{T} $ & 0.0433	& 0.1249 \\					
		\bottomrule
	\end{tabular}
	\begin{tablenotes}
		\item[] The best performance is marked in bold entity.
	\end{tablenotes}
\end{table}

\begin{figure}[htbp]
	\centering
	\subfigure[$\mathbf{s}_\text{tag} = \left\{1,3\right\}^\text{T}$]{
		\includegraphics[width=0.23\textwidth]{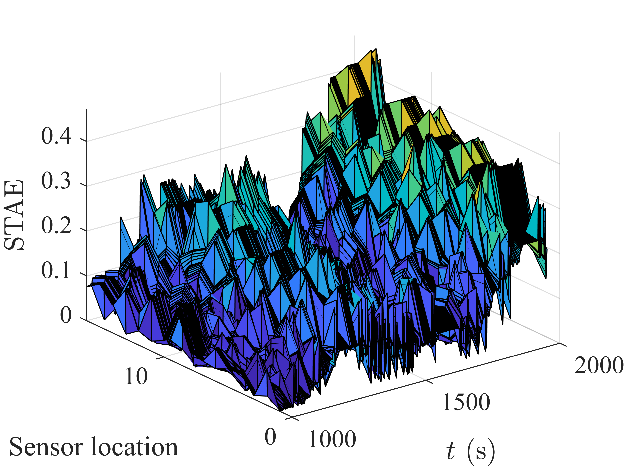}}
	\subfigure[$\mathbf{s}_\text{tag} = \left\{2,3\right\}^\text{T}$]{
		\includegraphics[width=0.23\textwidth]{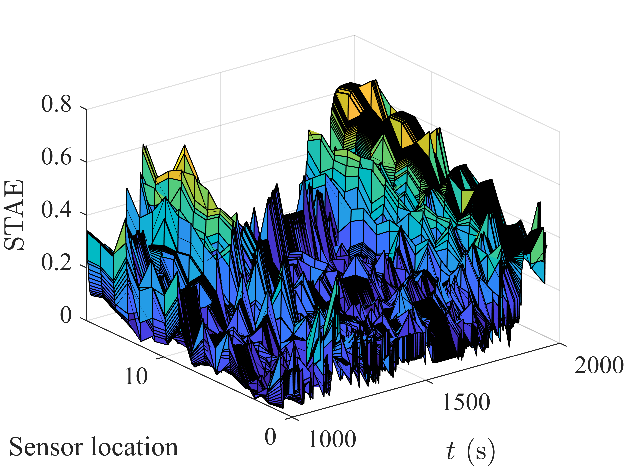}}
	\subfigure[$\mathbf{s}_\text{tag} = \left\{3,4\right\}^\text{T}$]{
		\includegraphics[width=0.23\textwidth]{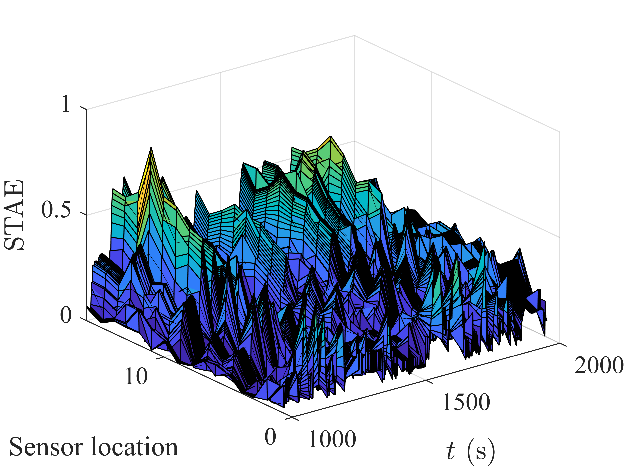}}
	\subfigure[$\mathbf{s}_\text{tag} = \left\{3,5\right\}^\text{T}$]{
		\includegraphics[width=0.23\textwidth]{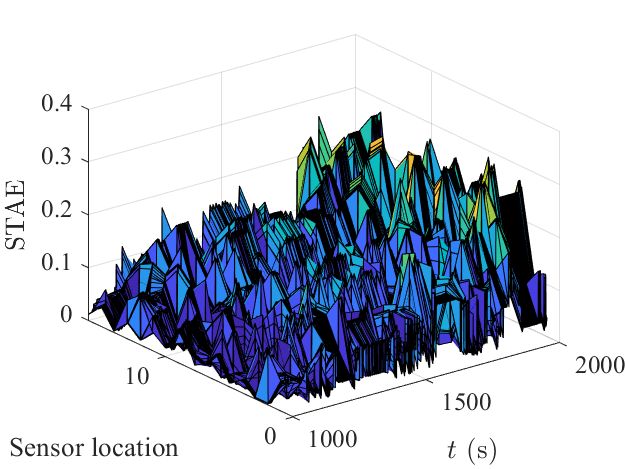}}
	\subfigure[$\mathbf{s}_\text{tag} = \left\{3,6\right\}^\text{T}$]{
		\includegraphics[width=0.23\textwidth]{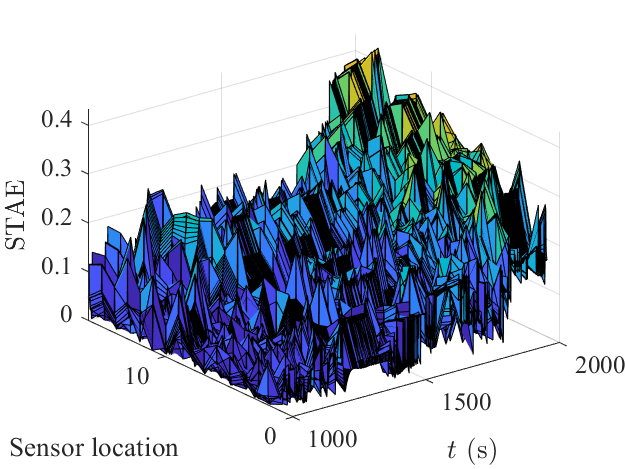}}
	\subfigure[$\mathbf{s}_\text{tag} = \left\{3,7\right\}^\text{T}$]{
		\includegraphics[width=0.23\textwidth]{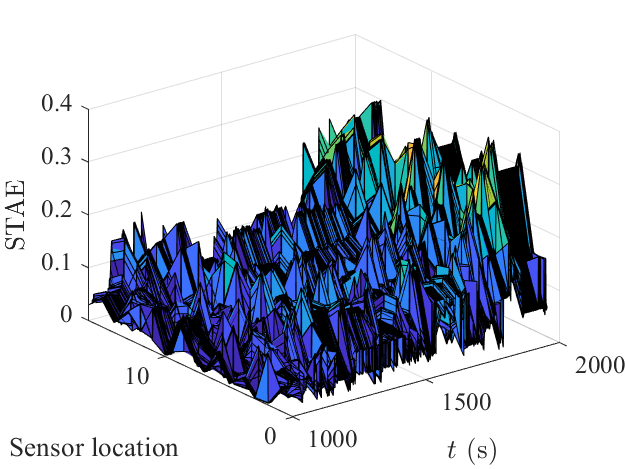}}
	\subfigure[$\mathbf{s}_\text{tag} = \left\{3,8\right\}^\text{T}$]{
		\includegraphics[width=0.23\textwidth]{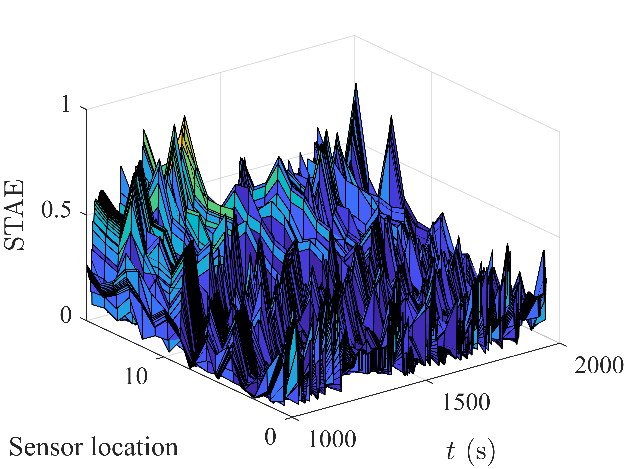}}
	\subfigure[$\mathbf{s}_\text{tag} = \left\{3,9\right\}^\text{T}$]{
		\includegraphics[width=0.23\textwidth]{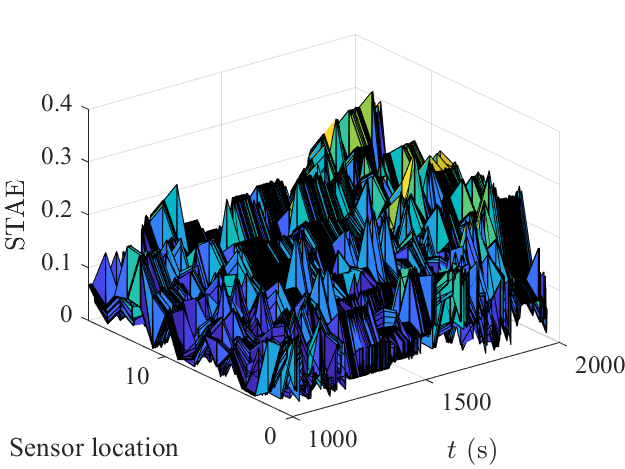}}
	\subfigure[$\mathbf{s}_\text{tag} = \left\{3,10\right\}^\text{T}$]{
		\includegraphics[width=0.23\textwidth]{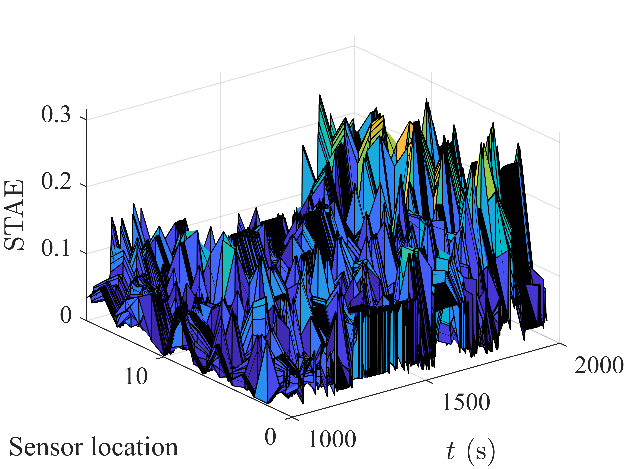}}
	\subfigure[$\mathbf{s}_\text{tag} = \left\{3,11\right\}^\text{T}$]{
		\includegraphics[width=0.23\textwidth]{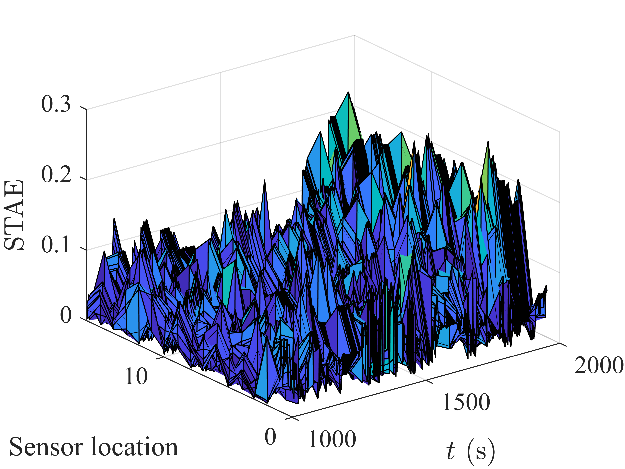}}
	\subfigure[$\mathbf{s}_\text{tag} = \left\{3,12\right\}^\text{T}$]{
		\includegraphics[width=0.23\textwidth]{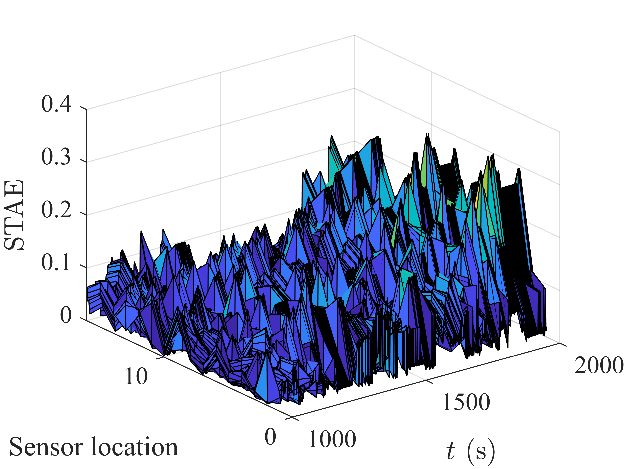}}
	\subfigure[$\mathbf{s}_\text{tag} = \left\{3,13\right\}^\text{T}$]{
		\includegraphics[width=0.23\textwidth]{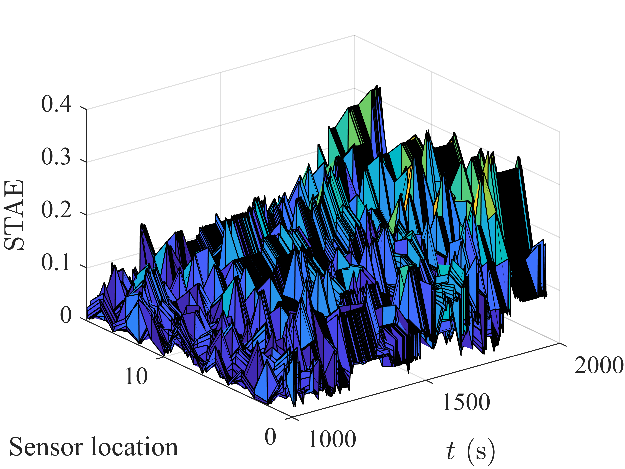}}
	\subfigure[$\mathbf{s}_\text{tag} = \left\{3,14\right\}^\text{T}$]{
		\includegraphics[width=0.23\textwidth]{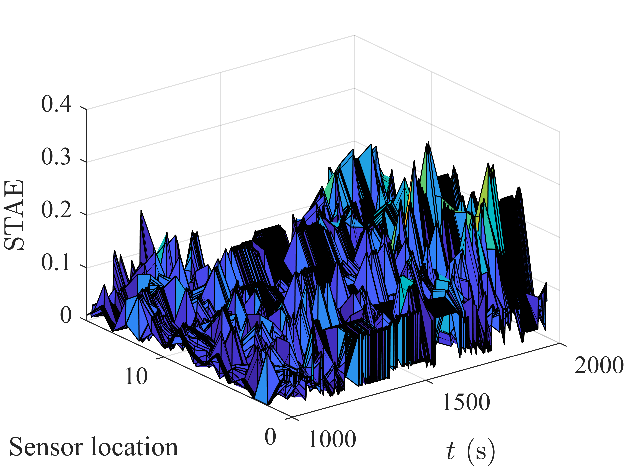}}
	\subfigure[$\mathbf{s}_\text{tag} = \left\{3,15\right\}^\text{T}$]{
		\includegraphics[width=0.23\textwidth]{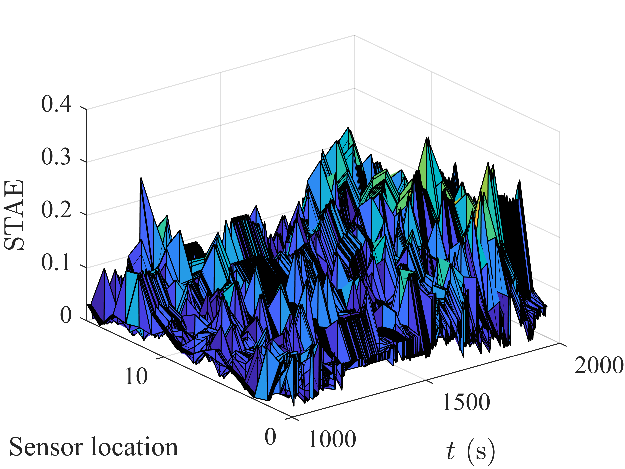}}
	\subfigure[$\mathbf{s}_\text{tag} = \left\{3,16\right\}^\text{T}$]{
		\includegraphics[width=0.23\textwidth]{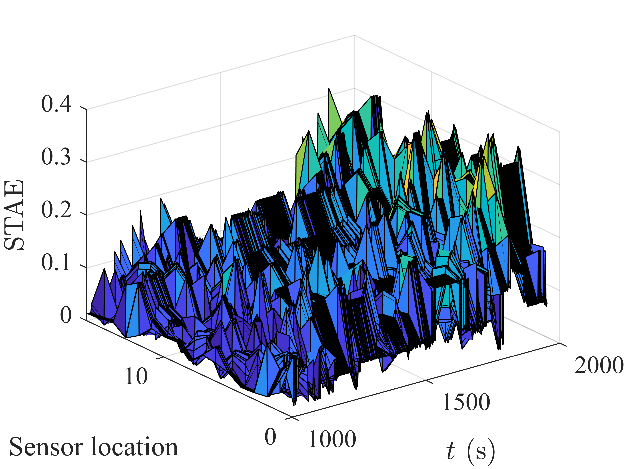}}
	\caption{Spatiotemporal absolute error (STAE) with two sensors under different sensing schemes. In general, there is a discernible upward trend in the STAEs as time progresses. Nonetheless, it remains challenging to definitively determine the most optimal sensing solution based solely on the STAE distribution diagram provided above.}
	\label{fig:RMSEs_different_sensing_schemes}
\end{figure}

\begin{figure}[htbp] 
	\centering
	\includegraphics[width=0.65\textwidth]{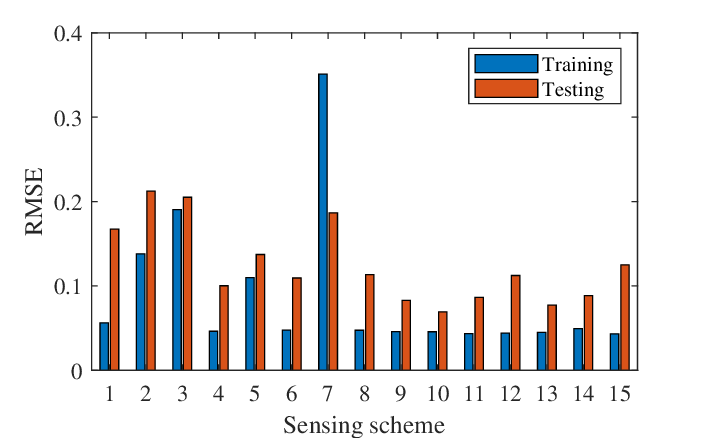}
	\caption{RMSE results under different sensing schemes. Evidently, even when employing only two sensors, the influence of varying sensing schemes on RMSEs remains noticeably discernible. It is noteworthy that, except for the 7th scheme, all test RMSEs exceed their corresponding training RMSEs. This observation underscores the fact that, during the online phase, the testing RMSE more accurately mirrors the actual modeling performance. Consequently, the testing RMSE assumes a primary role in the decision-making process for sensor placement. In light of this, the subsequent experiments are conducted under the 10th sensing scheme, which exhibits the lowest testing RMSE.}
	\label{fig:RMSE_comparison_different_sensing_schemes}	
\end{figure}

\begin{figure}[htbp]
	\centering
	\subfigure[First SBF $ \psi_1(x,y) $]{
		\includegraphics[width=0.4\textwidth]{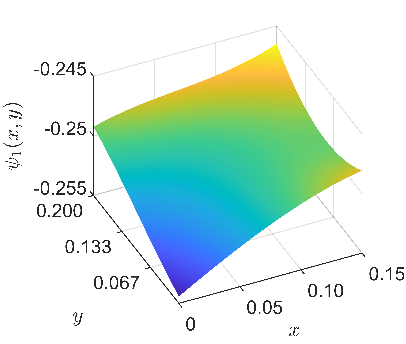}}
	\subfigure[Second SBF $ \psi_2(x,y) $]{
		\includegraphics[width=0.4\textwidth]{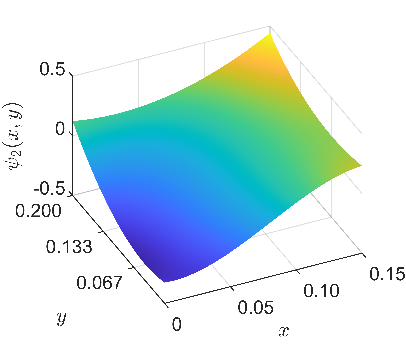}}
	\caption{Constructed spatially continuous SBFs. During the online process, despite the utilization of only two sensors, the capability to discern all spatial features remains intact. This assurance lays the groundwork for subsequent spatiotemporal predictions encompassing the entirety of the spatial domain.}
	\label{fig:SBFs}
\end{figure}


In accordance with the proposed space construction scheme, the initial two spatially continuous SBFs are derived, illustrated in Fig. \ref{fig:SBFs}. Notably, during the online stage, only two sensors are utilized for data acquisition, as demonstrated in Fig. \ref{fig:sensing_diagram}(b). Leveraging the spatially continuous SBFs enables the prediction of complete spatial temperature distributions even under conditions of sparse sensing, as depicted in Figures \ref{fig:prediction_1500s} and \ref{fig:prediction_2000s}. 
Indeed, the temperature near the positive pole of the battery cell exceeds that of other regions, aligning with the actual observed patterns.


The STAE distributions are provided alongside their corresponding temperature distributions. Overall, the STAEs at 2,000 s exhibit higher values compared to those at 1,500 s. To quantitatively depict the variation in modeling error over time, the spatial normalized absolute error (SNAE) curve from 0 $\sim$ 2,000 s is presented in Fig. \ref{fig:SNAE}. The SNAE is defined as in Equation (\ref{equ:SNAE}). Notably, during the training phase, the SNAE demonstrates a relatively smooth transition. Conversely, during the testing phase, the SNAE displays fluctuations and an inclination towards an increase over time.

\begin{figure}[htbp] 
	\centering
	\includegraphics[width=0.8\textwidth]{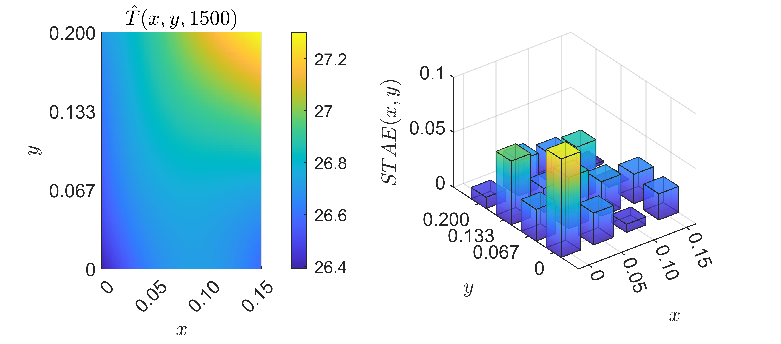}
	\caption{Full space prediction with corresponding STAE at 1,500 s.}
	\label{fig:prediction_1500s}	
\end{figure}

\begin{figure}[htbp] 
	\centering
	\includegraphics[width=0.8\textwidth]{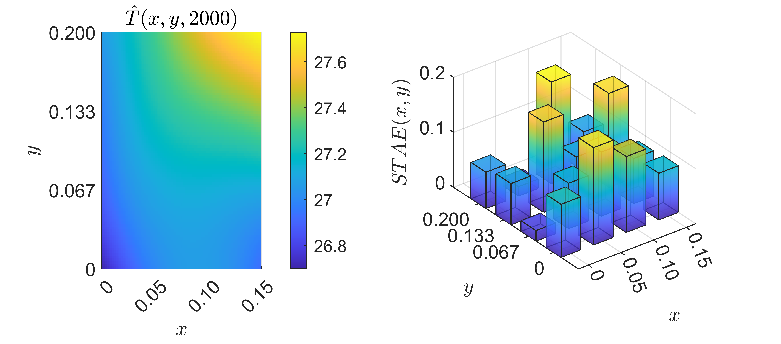}
	\caption{Full space prediction with corresponding STAE at 2,000 s.}
	\label{fig:prediction_2000s}	
\end{figure}

\begin{figure}[htbp] 
	\centering
	\includegraphics[width=0.8\textwidth]{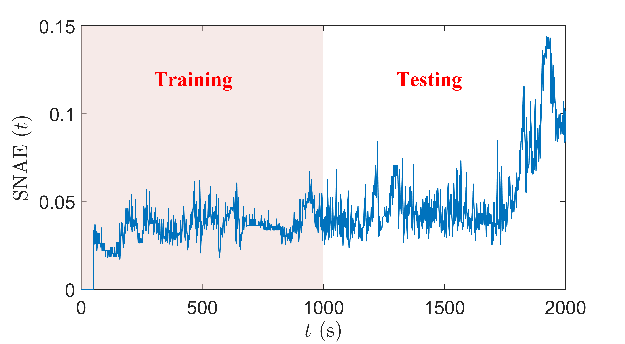}
	\caption{Spatial normalized absolute error (SNAE) from 0 $\sim$ 2,000 s. Prior to the 1800s, the SNAE exhibited fluctuations centered around 0.05. However, starting from the 1800s, there is a discernible and steep upward trajectory in SNAE.}
	\label{fig:SNAE}	
\end{figure}


In Table \ref{tab:performance_comparison}, the conventional Karhunen-Loeve (KL) method \cite{qi2011time}, sliding-window KL method \cite{wang2018sliding}, and sparse KL method \cite{chen2020spatiotemporal} are compared with the proposed method. The conventional KL and the SW-KL demonstrate superior modeling performance in terms of training and testing RMSEs, owing to their utilization of a higher number of sensors. However, they are not viable under sparse sensing conditions and are incapable of achieving full-space prediction. Although the sparse KL method can work with sparse sensing, its modeling accuracy is unsatisfactory, and it fails to achieve complete space prediction. Leveraging the proposed space construction methodology, the proposed method successfully achieves full space prediction even under sparse sensing. As the LSTM algorithm can continuously update the temporal model with the latest data and the constructed SBFs are relatively immune to noise, the proposed method exhibits smaller training and testing RMSEs in comparison to the sparse KL method under the same sensing conditions. Overall, the proposed method surpasses other comparative methodologies in terms of RMSE under identical sensing conditions, while successfully achieving full space prediction under sparse sensing conditions.

\begin{table}[htbp]
	\centering
		\caption{Performance Comparison of Different Methods.}
		\label{tab:performance_comparison}
		\begin{tabular} {ccccc} 
			\toprule 
			& \makecell{Training\\RMSE} & \makecell{Testing\\RMSE} &\makecell{Work under\\sparse sensing} & \makecell{Full space\\prediction}\\
			\midrule
			Conventional KL   & 0.0238  & 0.0475  & \XSolid\  (16 sensors) & \XSolid \\
			SW-KL   & 0.0217  & 0.0391  & \XSolid\ (16 sensors) & \XSolid \\	
			Sparse KL   & 0.0623 & 0.1494  & \Checkmark\ (2 sensors) & \XSolid \\	
			\textbf{Proposed method} & \textbf{0.0457} & \textbf{0.0692} & \Checkmark\ \textbf{(2 sensors)} & \Checkmark\\
			\bottomrule
		\end{tabular}
		\begin{tablenotes}
			\item[] The performance of the proposed method is marked in bold entity.
		\end{tablenotes}
\end{table}

\section{Discussion}\label{sec3}

In summary, the research presented in this study is dedicated to addressing the challenge of achieving full-space modeling of DPSs in sensor-limited scenarios. Our proposed two-stage space construction methodology is rigorously validated through a series of experiments centered on the thermal processes of Li-ion batteries. Furthermore, we thoroughly investigate the impact of varying sensor quantities on the modeling performance of our proposed framework. Notably, our results suggest that the performance of our proposed method exhibits limited sensitivity to the number of sensors once the count surpasses two, thus laying the foundation for achieving full-space modeling under conditions of sparse sensing. The experimental outcomes serve to underscore the efficacy of our proposed approach, as it successfully achieves full-space prediction of battery thermal processes, attaining a training RMSE of 0.0457 and a testing RMSE of 0.0692, all while employing only two online sensors.

Our analysis reveals that even with only two sensors, the specific locations chosen for sensor placement exert a discernible influence on modeling performance. This observation underscores the need for further efforts on optimizing sensor positioning strategies. Moreover, while our work has focused on the application of space construction method for sparse-sensing based online modeling of BESSs, such framework is promising for other full-space modeling and prediction tasks under sparse sensing, particularly those for spatiotemporal dynamical systems such as chip curing processes, chemical reaction processes, and robotic arm systems.

\section{Methods}\label{sec4}
\subsection{Discrete space completion under sparse sensing}

Let's consider a two-dimensional space, as illustrated in Fig. \ref{fig:framework}. We assume the usage of $ N_\text{f} $ and $ N_\text{s} $ sensors, where $ N_\text{f} \geq N_\text{s} $, for data collection under conditions of full sensing and sparse sensing, respectively. During the offline stage, let's suppose $ l_1 $ data snapshots are collected. The establishment of a mapping between sparse and full sensing is outlined as follows:
\begin{equation}\label{equ:mapping}
	\mathbf{T}_\text{s} = \mathbf{M} \mathbf{T}_\text{f}
\end{equation}
where $ \mathbf{T}_\text{s} \in \mathcal{R}^{N_\text{s}\times l_1}$ and $ \mathbf{T}_\text{f} = [\mathbf{s}(1),\mathbf{s}(2),\ldots,\mathbf{s}(l_1)]^\text{T} \in \mathcal{R}^{N_\text{f}\times l_1}$ represent the data matrix corresponding to sparse and full sensing, respectively. The mapping matrix $ \mathbf{M} \in \mathcal{R}^{N_\text{s}\times N_\text{f}}$ is defined as:
\begin{equation}\label{}
	\mathbf{M}=(m_{ij})\left\{\begin{array}{l}
		1,\ \text{if}\ (i,j)\in \{(i,s_i)\}_{i=1}^{N_\text{s}} \\
		0,\ \text{otherwise}
	\end{array}\right.
\end{equation}
where $ s_i $ denotes the $ i $th tag number under sparse sensing. For instance, the tag vector corresponding to Fig. 1(a) can be expressed as $ \mathbf{s}_\text{tag}  = [s_1,s_2,s_3,s_4]^\text{T} = [2,8,11,13]^\text{T}$.

In the effort to extract the temporal and spatial dynamics, the measurement matrices can be decomposed through space-time separation \cite{zhou2023data} as follows:  
\begin{equation}\label{equ:sparse_KL}
	\mathbf{T}_\text{s} = \mathbf{\Phi}_\text{s} \mathbf{a}_\text{s}
\end{equation}
\begin{equation}\label{equ:full_KL}
	\mathbf{T}_\text{f} = \mathbf{\Phi}_\text{f} \mathbf{a}_\text{f}
\end{equation}
where $ \mathbf{\Phi}_\text{s} = [\boldsymbol{\varphi}_1,\boldsymbol{\varphi}_2,\ldots,\boldsymbol{\varphi}_{n_\text{s}}] \in \mathcal{R}^{N_\text{s}\times n_\text{s}}$ is the spatial basis function (SBF) matrix under sparse sensing with $ \boldsymbol{\varphi}_i \in \mathcal{R}^{N_\text{s}} $ denoting the $ i $th sparse SBF; $ \mathbf{\Phi}_\text{f} = [\boldsymbol{\zeta}_1,\boldsymbol{\zeta}_2,\ldots,\boldsymbol{\zeta}_{n_\text{f}}] \in \mathcal{R}^{N_\text{f}\times n_\text{f}}$ is the SBF matrix under full sensing with $ \boldsymbol{\zeta}_i \in \mathcal{R}^{N_\text{f}} $ denoting the $ i $th full SBF; $ \mathbf{\Phi}_\text{s} $ and $ \mathbf{\Phi}_\text{s} $ are both orthogonal matrices, that is, the SBFs are orthogonal to each other in a SBF matrix; $ \mathbf{a}_\text{s} \in \mathcal{R}^{n_\text{s}\times l_1}$ and $ \mathbf{a}_\text{f} \in \mathcal{R}^{n_\text{f}\times l_1}$ are temporal coefficient matrices under sparse and full sensing, respectively; $ n_\text{s} $ and $ n_\text{f} $ represent the model orders under sparse sensing and full sensing, respectively; The model order can be derived according to the first $ n_\text{s} $ (or $ n_\text{f} $) number of SBFs occupying 99\% system energy \cite{li2010modeling} under sparse sensing (or full sensing).

Substituting (\ref{equ:sparse_KL})(\ref{equ:full_KL}) into (\ref{equ:mapping}), the iterative computation of the full temporal coefficient vector can be expressed as:
\begin{equation}\label{equ:derivation_full_temporal_coefficient}
	\mathbf{a}_\text{f} = (\mathbf{\Phi}_\text{s}^\text{T} \mathbf{M} \mathbf{\Phi}_\text{f})^\dagger \mathbf{a}_\text{s} 
\end{equation}
where the symbol $ \dagger $ denotes the pseudo inverse of a matrix. During the online stage, the full data matrix in (\ref{equ:full_KL}) can be expressed as 
\begin{equation}\label{equ:spaital_completion}
	\widehat{\mathbf{T}}_\text{f} = \mathbf{\Phi}_\text{f} \mathbf{a}_\text{f} = \mathbf{\Phi}_\text{f}  (\mathbf{\Phi}_\text{s}^\text{T} \mathbf{M} \mathbf{\Phi}_\text{f})^\dagger \mathbf{a}_\text{s}
\end{equation}
Note that $ \mathbf{\Phi}_\text{s} $ and $ \mathbf{\Phi}_\text{f} $ have been derived by the space-time separation. Therefore, we can use the sparse temporal coefficient matrix $ \mathbf{a}_\text{s} $ to recover the full spatial measurement by (\ref{equ:spaital_completion}).

\subsection{Continuous space construction}
In order to preserve the interactions between different spatial dimensions, each column of the full SBF matrix $ \mathbf{\Phi}_\text{f} $, i.e., $ \{\boldsymbol{\zeta}\}_{i=1}^{n_\text{f}} $ should be reshaped to a $ (N_1 \times N_2)$ SBF matrix, represented as $\{\boldsymbol{\phi}_i \}_{i=1}^{n_\text{f}} $. Let the model order be denoted as $ n=n_\text{f} $. The continuous SBFs $ \{\psi_i(x,y)\}_{i=1}^n $ can be constructed under the following optimization:
\begin{equation}\label{equ:space_construction_optimization}
	\begin{aligned}
		&\min_{\psi_i(x,y)} \sum_{j_1=1}^{N_1}\sum_{j_2=1}^{N_2} \left[\boldsymbol{\phi}_i(j_1,j_2)-\psi_i(x_{j_1},y_{j_2})\right]^2 \\
		\text{s.t.}\ &\psi_i(x,y)\in C^r(\Omega);\ i=1,2,\ldots,n;\ r=0,1,2\ldots   
	\end{aligned}
\end{equation}
where $ \psi_i(x,y) $ denotes the $ i $th spatially continuous SBF to be designed; $ x $ and $ y $ denote the spatial variables corresponding to the first and second spatial dimensions, respectively; $ N_1 $ and $ N_2 $ signify the number of sensors along the $ x $ and $ y $ directions, respectively; $ \boldsymbol{\phi}_i \in \mathcal{R}^{N_1\times N_2}$ is the $ i $th discrete full SBF matrix; $ C^r(\Omega) $ denotes the high-dimensional function set with continuous $ r $-order partial derivatives along the $ x $ and $ y $ directions, where $\Omega$ represents the entire space domain; and $ n $ is the system model order, selected as the model order under full sensing in practical implementation, i.e., $ n=n_\text{f} $.

In order to ensure an appropriate solution for the optimization problem (\ref{equ:space_construction_optimization}), the high-dimensional continuous SBF $ \psi_i $ should adhere to the following design principles:
\begin{itemize}
	\item[(a)] $ \psi_i $ is required to possess continuous first- and second-order partial derivatives along the $ x $ and $ y $ directions, denoted as $ \psi_i \in C^2(\Omega)$.
	\item[(b)] $ \psi_i $ must be a function of spatial coordinates $ x $ and $ y $ and not a parametric function.
	\item[(c)]  $ \psi_i $ should demonstrate insensitivity to outliers, meaning that deviations in a data point will only affect a portion of the SBF rather than the entire function.
\end{itemize}

\begin{definition}
	For a given function $ f(x,y) \in C(\Omega)$, if there exists a function $ p^{*}(x,y) \in H_n(\Omega) $ such that
	\begin{equation}\label{equ:definition1}
		\left(f-p^{*},f-p^{*}\right) = \min_{p_i \in H_n(\Omega)} \left(f-p_i,f-p_i\right)
	\end{equation}
	then $ p^{*}(x,y) $ is referred to as the least squares approximation element in the subspace $ H_n(\Omega) $, where $ H_n = \text{Span}\{p_1,p_2,\ldots,p_n\} $. Here, $ \left(\cdot,\cdot\right) $ denotes the inner product operator defined as $ \left(f(x,y),g(x,y)\right) \triangleq \int\int\int_\Omega f(x,y)\cdot g(x,y) dxdy$; $ p_i \in C(\Omega) $, $\left(p_i,p_j\right)|_{i \neq j} = 0 $; $ \Omega $ represents the entire space domain. 
\end{definition}

\begin{theorem}\label{thm1}
	Suppose there exists an infinite number of sensors uniformly distributed in the space domain $\Omega$. Consequently, $ \boldsymbol{\phi}_i $ in (\ref{equ:space_construction_optimization}) can be considered as a high-dimensional continuous function in $\Omega$, and the optimization (\ref{equ:space_construction_optimization}) is tantamount to the least squares approximation in (\ref{equ:definition1}). The sufficient and necessary condition for $ \psi_i^{*}(x,y) \in H(\Omega)$ to be the optimal solution of the optimization (\ref{equ:space_construction_optimization}) is that
	\begin{equation}\label{key}
		\left(\boldsymbol{\phi}_i(x,y)-\psi_i^{*}(x,y),\varphi_j(x,y)\right)=0
	\end{equation}
	or for any $ \psi_i(x,y) \in H(\Omega) $, 
	\begin{equation}\label{equ:condition_in_theorem1}
		\left(\boldsymbol{\phi}_i(x,y)-\psi_i^{*}(x,y),\psi_i(x,y)\right) = 0
	\end{equation}
	in which $ H(\Omega)\triangleq\text{Span}\{\varphi_1,\varphi_2,\ldots\}$; $ \varphi_j(x,y) \in C^r(\Omega)$; $\left(\varphi_j,\varphi_k\right)|_{j \neq k} = 0 $; $\left(\varphi_j,\varphi_k\right)|_{j = k} = 1 $; $ i=1,2,\ldots,n $; $ j=1,2,\ldots $; $ r=0,1,2,\ldots $ 
\end{theorem}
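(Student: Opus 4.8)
The plan is to read Theorem~\ref{thm1} as the Hilbert-space projection theorem --- the ``normal equations'' characterization of the best $L^2$ approximation --- applied to the residual $\boldsymbol{\phi}_i-\psi_i$, and to carry it out in three stages: (i) reduce the discrete objective to the continuous least-squares problem; (ii) establish the orthogonality characterization in both directions; (iii) show the two stated forms of the condition are equivalent.

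For stage (i), with $N_1,N_2\to\infty$ on a uniform grid the rescaled objective $\frac{|\Omega|}{N_1 N_2}\sum_{j_1=1}^{N_1}\sum_{j_2=1}^{N_2}\big[\boldsymbol{\phi}_i(x_{j_1},y_{j_2})-\psi_i(x_{j_1},y_{j_2})\big]^2$ is a Riemann sum converging to $\iint_\Omega\big[\boldsymbol{\phi}_i(x,y)-\psi_i(x,y)\big]^2 dx\,dy = (\boldsymbol{\phi}_i-\psi_i,\boldsymbol{\phi}_i-\psi_i)$, using the continuity of $\boldsymbol{\phi}_i$ in the infinite-sensor limit and $\psi_i\in C^r(\Omega)$. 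Since the positive factor $|\Omega|/(N_1N_2)$ does not change the minimizer, minimizing the discrete sum over $\psi_i\in H(\Omega)$ is exactly the least-squares problem (\ref{equ:definition1}) with subspace $H(\Omega)=\mathrm{Span}\{\varphi_1,\varphi_2,\ldots\}$ and the residual norm $\|\cdot\|$ induced by $(\cdot,\cdot)$.

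Stage (ii). For \textbf{necessity}, assume $\psi_i^{*}\in H(\Omega)$ is optimal, fix any $\psi_i\in H(\Omega)$ and a scalar $\lambda$. Because $\psi_i^{*}+\lambda\psi_i\in H(\Omega)$, the quadratic $g(\lambda)=\|\boldsymbol{\phi}_i-\psi_i^{*}-\lambda\psi_i\|^2 = \|\boldsymbol{\phi}_i-\psi_i^{*}\|^2-2\lambda\big(\boldsymbol{\phi}_i-\psi_i^{*},\psi_i\big)+\lambda^2\|\psi_i\|^2$ attains its minimum at $\lambda=0$, so $g'(0)=0$ gives $(\boldsymbol{\phi}_i-\psi_i^{*},\psi_i)=0$, i.e.\ (\ref{equ:condition_in_theorem1}); taking $\psi_i=\varphi_j$ yields (\ref{key}). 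For \textbf{sufficiency}, assume $(\boldsymbol{\phi}_i-\psi_i^{*},\psi_i)=0$ for all $\psi_i\in H(\Omega)$. Then for any $p\in H(\Omega)$, decomposing $\boldsymbol{\phi}_i-p = (\boldsymbol{\phi}_i-\psi_i^{*})+(\psi_i^{*}-p)$ with $\psi_i^{*}-p\in H(\Omega)$ and expanding gives $\|\boldsymbol{\phi}_i-p\|^2 = \|\boldsymbol{\phi}_i-\psi_i^{*}\|^2+\|\psi_i^{*}-p\|^2\ge\|\boldsymbol{\phi}_i-\psi_i^{*}\|^2$, so $\psi_i^{*}$ is optimal (with equality iff $p=\psi_i^{*}$, which also gives uniqueness). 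Stage (iii): by bilinearity and continuity of $(\cdot,\cdot)$, orthogonality to every $\varphi_j$ extends to every finite linear combination and hence to all of $H(\Omega)=\overline{\mathrm{Span}}\{\varphi_j\}$, while the converse is immediate since each $\varphi_j\in H(\Omega)$.

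The main obstacle I anticipate is stage (i), not the projection argument: making ``an infinite number of sensors uniformly distributed in $\Omega$'' rigorous --- choosing the correct normalization of the discrete sum, invoking Riemann-sum convergence uniformly over the admissible class of $\psi_i$, and reinterpreting the finite reshaped vector $\boldsymbol{\phi}_i$ as a bona fide element of $C(\Omega)$. A secondary point is that the equivalence of the two conditions, and the existence of $\psi_i^{*}$ presupposed by the statement, implicitly require $H(\Omega)$ to be a closed (complete) subspace of $L^2(\Omega)$; I would either build completeness of $\mathrm{Span}\{\varphi_j\}$ into the meaning of $H(\Omega)$, or work with the finite-order truncation $H_n(\Omega)$ actually used in the construction, for which closedness --- and hence existence --- is automatic.
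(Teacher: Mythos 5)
Your proposal is correct and follows essentially the same route as the paper: sufficiency via the identical Pythagorean decomposition $\boldsymbol{\phi}_i-\psi_i=(\boldsymbol{\phi}_i-\psi_i^{*})+(\psi_i^{*}-\psi_i)$ with the cross term killed by orthogonality, and necessity via a perturbation within $H(\Omega)$ (your first-variation condition $g'(0)=0$ is the same computation as the paper's contradiction argument, whose competitor $q=\psi_i^{*}+\frac{\sigma}{(\varphi_k,\varphi_k)}\varphi_k$ is exactly the minimizer of your quadratic $g(\lambda)$). Your stage (i) Riemann-sum justification and the remark that existence/equivalence of the two conditions requires closedness of $H(\Omega)$ (or restriction to the finite truncation $H_n(\Omega)$) address points the paper leaves implicit, and are worth keeping.
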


\begin{proof}
	1) Proof of necessity: Suppose there exists a high-dimensional continuous function $\varphi_k(x,y,z) \in H(\Omega)$ such that
	\begin{equation}\label{equ:assumption1}
		\left(\boldsymbol{\phi}_i(x,y)-\psi_i^{*}(x,y),\varphi_k(x,y)\right) = \sigma \neq 0
	\end{equation}
	in which $ \sigma $ is a real number.
	Let
	\begin{equation}\label{key}
		q(x,y,z) = \psi_i^{*}(x,y,z) + \frac{\sigma}{(\varphi_k,\varphi_k)}\varphi_k(x,y,z).
	\end{equation}
	Obviously, $ q(x,y,z) \in H(\Omega) $ since $ \psi_i^{*},\varphi_k \in H(\Omega)$. Then, we have
	\begin{equation}\label{key}
		\begin{aligned}
			&(\boldsymbol{\phi}_i-q,\boldsymbol{\phi}_i-q) \\
			=& (\boldsymbol{\phi}_i-\psi_i^{*},\boldsymbol{\phi}_i-\psi_i^{*}) - \frac{2\sigma}{(\varphi_k,\varphi_k)}(\boldsymbol{\phi}_i-\psi_i^{*},\varphi_k)
			+ \frac{\sigma^2}{(\varphi_k,\varphi_k)^2}(\varphi_k,\varphi_k)
		\end{aligned}
	\end{equation}
	According to (\ref{equ:assumption1}), we have
	\begin{equation}\label{key}
		\begin{aligned}
			(\boldsymbol{\phi}_i-q,\boldsymbol{\phi}_i-q) =& (\boldsymbol{\phi}_i-\psi_i^{*},\boldsymbol{\phi}_i-\psi_i^{*}) - \frac{\sigma^2}{(\varphi_k,\varphi_k)^2} \\
			<& (\boldsymbol{\phi}_i-\psi_i^{*},\boldsymbol{\phi}_i-\psi_i^{*})
		\end{aligned}
	\end{equation}
	Consequently, $ \psi_i^{*}(x,y) $ does not represent the least squares approximation element of $ \boldsymbol{\phi}_i $. This contradiction establishes the necessity.
	
	2) Proof of sufficiency: Suppose the condition (\ref{equ:condition_in_theorem1}) holds. Then, for any $ \psi_i \in H(\Omega) $, we have
	\begin{equation}\label{key}
		\begin{aligned}
			&(\boldsymbol{\phi}_i-\psi_i,\boldsymbol{\phi}_i-\psi_i)\\
			=&(\boldsymbol{\phi}_i-\phi_i^{*}+\phi_i^{*}-\psi_i^,\boldsymbol{\phi}_i-\phi_i^{*}+\phi_i^{*}-\psi_i)\\
			=&(\boldsymbol{\phi}_i-\psi_i^{*},\boldsymbol{\phi}_i-\psi_i^{*})+2(\boldsymbol{\phi}_i-\psi_i^{*},\psi_i^{*}-\psi_i)\\
			&+(\psi_i^{*}-\psi_i,\psi_i^{*}-\psi_i)
		\end{aligned}
	\end{equation}
	According to (\ref{equ:condition_in_theorem1}), since $ \psi_i^{*}\in H(\Omega)$, we have
	\begin{equation}\label{key}
		\left(\boldsymbol{\phi}_i-\psi_i^{*},\psi_i^{*}\right) = 0
	\end{equation}
	Since
	\begin{equation}\label{key}
		\begin{aligned}
			(\boldsymbol{\phi}_i-\psi_i^{*},\psi_i^{*}-\psi_i)	= (\boldsymbol{\phi}_i-\psi_i^{*},\psi_i^{*})-(\boldsymbol{\phi}_i-\psi_i^{*},\psi_i)
			=0
		\end{aligned}
	\end{equation}
	and
	\begin{equation}\label{key}
		(\psi_i^{*}-\psi_i,\psi_i^{*}-\psi_i) \geq 0,
	\end{equation}
	we have
	\begin{equation}\label{key}
		(\boldsymbol{\phi}_i-\psi_i,\boldsymbol{\phi}_i-\psi_i) \geq (\boldsymbol{\phi}_i-\psi_i^{*},\boldsymbol{\phi}_i-\psi_i^{*})
	\end{equation}
	Consequently, $ \psi_i^{*} $ represents the least squares approximation element in $ H(\Omega) $ for $ \boldsymbol{\phi}_i $. The aforementioned proofs of necessity and sufficiency culminate in the proof of Theorem 1. 
\end{proof}

Based on Theorem 1, the optimization (\ref{equ:space_construction_optimization}) is transformed into the search for a collection of high-dimensional functions $ \{\psi_i^{*}\}_{i=1}^n $ that satisfy condition (\ref{equ:condition_in_theorem1}). In this context, the cubic B-spline method \cite{prautzsch2002bezier} is utilized to generate the continuous high-dimensional SBFs due to the following reasons:
\begin{itemize}
	\item [1)] In accordance with the \textit{continuity property} outlined in Ref. \cite{wei2022two}, the cubic B-spline surface maintains $ C^2 $ continuity, adhering to the design principle (a).
	\item [2)] Parametric B-spline surfaces can be transformed into functions of physical spatial coordinates, aligning with design principle (b).
	\item [3)] As outlined in the \textit{local modification property} in Ref. \cite{wei2022two}, B-spline functions fulfill the design principle (c).
	\item [4)] The B-spline features minimal support for a given degree, smoothness, and domain partition, rendering it inherently suitable for function approximation under sparse sensing. Moreover, cubic B-spline functions satisfy condition (\ref{equ:condition_in_theorem1}) when $ r=2 $.
\end{itemize}

The derivation of cubic B-spline functions proceeds as outlined below. Initially, the quasi-uniform knot vector is selected to guarantee that the designed B-spline surface exhibits continuous first- and second-order derivatives. Subsequently, the non-periodic knot vectors $ U $ and $ W $ of the cubic B-spline surface are defined as follows:
\begin{equation}\label{equ:knot_vector1}
	U=\left\{\begin{array}{l}
		\{0,0,0,0,1,1,1,1\}, N_1=4 \\
		\left\{0,0,0,0, \frac{1}{N_1-3}, \ldots, \frac{N_1-4}{N_1-3}, 1,1,1,1\right\}, N_1>4
	\end{array}\right.
\end{equation}
\begin{equation}\label{equ:knot_vector2}
	W=\left\{\begin{array}{l}
		\{0,0,0,0,1,1,1,1\}, N_2=4 \\
		\left\{0,0,0,0, \frac{1}{N_2-3}, \ldots, \frac{N_2-4}{N_2-3}, 1,1,1,1\right\}, N_2>4
	\end{array}\right.
\end{equation}
Based on the derivation of B-spline surfaces as described in \cite{wei2022two}, the spatial coordinates $ x $, $ y $, and $ z $ of the continuous SBF $ \psi_i(x,y) $ are formulated as functions of the parameters $ u $ and $ w $ and can be represented as:
\begin{equation}\label{equ:B_spline_surface_x}
	x = \sum_{j_1=1}^{N_1} V_{j_1,3}(u)x_{j_1}
\end{equation}
\begin{equation}\label{equ:B_spline_surface_y}
	y = \sum_{j_2=1}^{N_2} V_{j_2,3}(w)y_{j_2}
\end{equation}
\begin{equation}\label{equ:B_spline_surface_z}
	\begin{aligned}
		z = \psi_i(u,w) = \sum_{j_1=1}^{N_1} \sum_{j_2=1}^{N_2} V_{j_1, 3}(u) V_{j_2, 3}(w) \phi_i(x_{j_1}, y_{j_2})
	\end{aligned}
\end{equation}
where $ V_{j_1, 3}(u) $ and $ V_{j_2, 3}(w) $ represent the $ j_1 $th and $ j_2 $th 3-degree B-spline, respectively. The $ j $th k-degree B-spline is defined as described in \cite{wei2022two}:
\begin{equation}\label{key}
	\begin{aligned}
		V_{j, k}(u)= & \frac{u-u_{j}}{u_{j+k-u_{j}}} V_{j, k-1}(u) \\
		& +\frac{u_{j+k+1}-u}{u_{j+k+1}-u_{j+1}} V_{j+1, k-1}(u)
	\end{aligned}
\end{equation}
with the initial zero-degree B-spline defined as:
\begin{equation}\label{key}
	V_{j, 0}(u)= \begin{cases}1 & \text { if } u_{j} \leq u<u_{j+1}, j=1, \ldots, N_1 \\ 0 & \text { otherwise }\end{cases}
\end{equation}

Upon obtaining the spatial coordinates $ x $ and $ y $, the control parameters $ u $ and $ w $ can be determined using (\ref{equ:B_spline_surface_x}) and (\ref{equ:B_spline_surface_y}), respectively. The spatial coordinate $ z $ is then computed by substituting the derived parameters $ u $ and $ w $ into (\ref{equ:B_spline_surface_z}). This process yields the continuous SBFs $ {\psi_i(x,y)}_{i=1}^n $.

\subsection{Nonlinear learning of temporal dynamics}
During the online stage, assume the data snapshot $ \mathbf{T}_\text{s}[t] \in \mathcal{R}^{N_\text{s}\times 1}$ under sparse sensing at time $ t $ is obtained. According to (\ref{equ:sparse_KL}), the temporal coefficient vector $ \mathbf{a}_\text{s}[t] $ can be derived as:
\begin{equation}\label{equ:sparse_temporal_coefficient}
	\mathbf{a}_\text{s}[t] = \mathbf{\Phi}_\text{s}^\text{T} \mathbf{T}_\text{s}[t]
\end{equation}
Substituting (\ref{equ:sparse_temporal_coefficient}) into (\ref{equ:derivation_full_temporal_coefficient}), we have 
\begin{equation}\label{key}
	\mathbf{a}_\text{f}[t] = (\mathbf{\Phi}_\text{s}^\text{T} \mathbf{M} \mathbf{\Phi}_\text{f})^\dagger \mathbf{\Phi}_\text{s}^\text{T} \mathbf{T}_\text{s}[t]
\end{equation}
where $ \mathbf{a}_\text{f}[t] = [a_1(t),a_2(t),\ldots,a_n(t)]^\text{T} \in \mathcal{R}^{n\time 1}$. The temporal model can be formulated as follows:
\begin{equation}\label{key}
	\hat{\mathbf{a}}_\text{f}[t]  =F\left(\mathbf{a}_\text{f}[t-1],\ldots,\mathbf{a}_\text{f}[t-l_a],\mathbf{u}[t-1],\ldots,\mathbf{u}[t-l_u]\right)
\end{equation}
where $ F(\cdot) $ is a nonlinear function to capture the system temporal dynamics; $ \mathbf{u}[t] \in \mathcal{R}^{n_u}$ is the system input at time $ t $ with $ n_u $ denoting the number of inputs; $ l_a $ and $ l_b $ denote the output and input lags, respectively. Many nonlinear identification methods, such as support vector machines (SVM) \cite{shevade2000improvements}, radial basis function networks (RBFN) \cite{she2019battery}, recurrent neural networks (RNN) \cite{liu2022transferred}, etc., can be used to identify the temporal model $ F(\cdot) $.  

\begin{figure}[htbp] 
	\centering
	\includegraphics[width=0.65\textwidth]{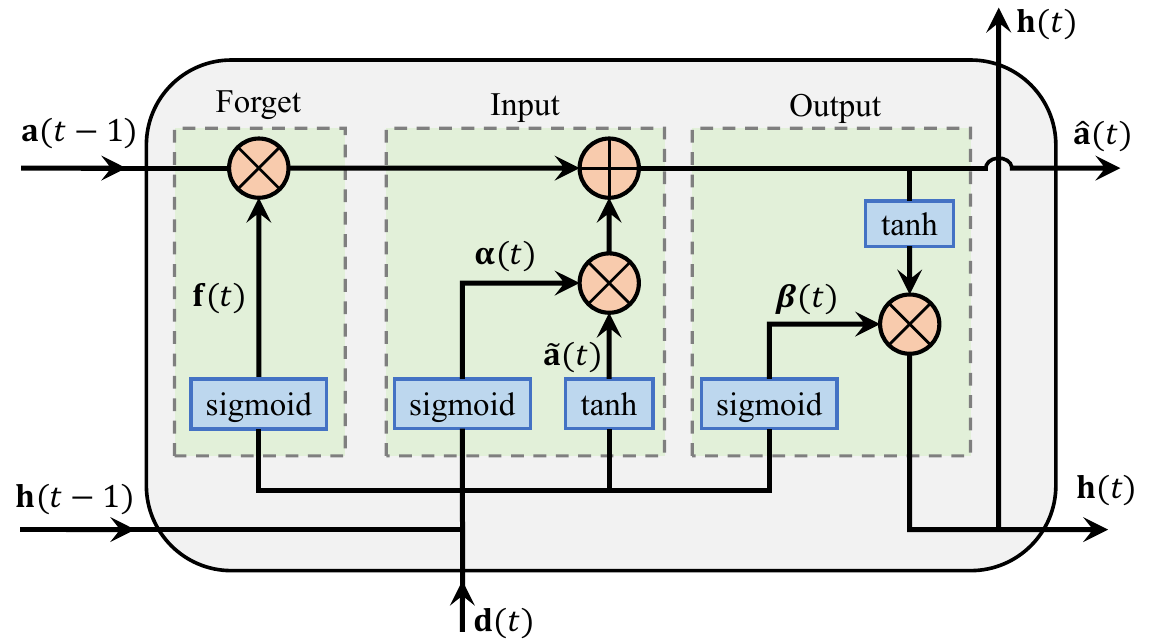}
	\caption{Architecture of a LSTM cell. The LSTM network utilized comprises the forget, input, and output layers. The symbols $\mathbf{h}(t)$, $\mathbf{f}(t)$, $\boldsymbol{\alpha}(t)$, $\boldsymbol{\beta}(t)$, and $\tilde{\mathbf{a}}(t)$ represent the hidden state, activation of the forget gate, activation of the input gate, activation of the output gate, and the potential output candidate, respectively.}
	\label{fig:LSTM}	
\end{figure}

Long short-term memory (LSTM) neural network is a type of recurrent neural network (RNN). The chain-type natural structure makes it naturally suitable for processing sequence-type data. Since the data in the temporal coefficients is highly correlated, the LSTM neural network is used to learn the temporal model as shown in Fig. \ref{fig:LSTM}. The input vector of the LSTM neural network is $ \mathbf{d}(t) = [a_1(t-1),\ldots,a_n(t-1),\mathbf{u}^\text{T}[t]]^\text{T} $. The output vector is $ \hat{\mathbf{a}}_\text{f}(t) = [\hat{a}_1(t),\hat{a}_2(t),\ldots,\hat{a}_n(t)]^\text{T} $. $ \mathbf{h}(t) $, $ \mathbf{f}(t) $, $ \boldsymbol{\alpha}(t) $, $ \boldsymbol{\beta}(t) $, and $ \tilde{\mathbf{a}}(t) $ are the hidden state, activation of forget gate, activation of input gate, activation of output gate, and potential output candidate, respectively. The detailed implementation of LSTM neural network can be referred to Refs. \cite{ojo2020neural,li2020battery,zhang2022data}.

\subsection{Space-time synthesis}
Following the continuous space construction and the nonlinear learning of temporal dynamics, the implementation of spatially continuous prediction is as follows:
\begin{equation}\label{equ:space-time_synthesis}
	\widehat{T}(x,y,t) = \sum_{i=1}^{n} \psi_i(x,y) \hat{a}_i(t)
\end{equation}
This process corresponds to the overall space estimation depicted in Fig. \ref{fig:conception}(c). It is important to note that the temporal coefficient function $ \hat{a}i(t) $ is acquired through learning from the sparse temporal coefficient $ \mathbf{a}\text{s} $. Consequently, the proposed method accomplishes complete space-time prediction even under sparse sensing conditions.

The following indexes are introduced for performance testing and comparisons:
\begin{itemize}
	\item[1)] Spatiotemporal absolute error:
	\begin{equation}\label{equ:STAE}
		\text{STAE}(x,y,t_i) = |\widehat{T}(x,y,t) - T(x,y,t_i)| 
	\end{equation}
	\item[2)] Spatial normalized absolute error:
	\begin{equation}\label{equ:SNAE}
		\text{SNAE}(t_i) = \frac{\int\int \text{STAE}(x,y,t_i)dxdy}{\int\int 1 dxdy}
	\end{equation}
	\item[3)] Root mean square error:
	\begin{equation}\label{equ:RMSE}
		\text{RMSE} = \left(\frac{\sum_{i=l_1+1}^{l_1+l_2}\int\int\text{STAE}(x,y,t_i)^2dxdy}{l_2\Delta t\int\int 1 dxdy}\right)^{1/2}
	\end{equation}
\end{itemize}
where $ T(x,y,t) $ and $ \widehat{T}(x,y,t) $ denote the measured and estimated outputs, respectively.


\bibliography{my_ref10}

\begin{thebibliography}{10}
\expandafter\ifx\csname url\endcsname\relax
  \def\url#1{\burl{#1}}\fi
\expandafter\ifx\csname urlprefix\endcsname\relax\def\urlprefix{URL }\fi
\providecommand{\bibinfo}[2]{#2}
\providecommand{\eprint}[2][]{\url{#2}}
\providecommand{\doi}[1]{\url{https://doi.org/#1}}
\bibcommenthead

\bibitem{christofides2002nonlinear}
\bibinfo{author}{Christofides, P.~D.} \& \bibinfo{author}{Chow, J.}
\newblock \bibinfo{title}{Nonlinear and robust control of pde systems: Methods
  and applications to transport-reaction processes}.
\newblock \emph{\bibinfo{journal}{Appl. Mech. Rev.}}
  \textbf{\bibinfo{volume}{55}}, \bibinfo{pages}{B29--B30}
  (\bibinfo{year}{2002}).

\bibitem{wei2023spatio}
\bibinfo{author}{Wei, P.} \& \bibinfo{author}{Li, H.-X.}
\newblock \bibinfo{title}{A spatio-temporal inference system for abnormality
  detection and localization of battery systems}.
\newblock \emph{\bibinfo{journal}{IEEE Transactions on Industrial Informatics}}
  \textbf{\bibinfo{volume}{19}}, \bibinfo{pages}{6275--6283}
  (\bibinfo{year}{2023}).

\bibitem{vanhille2008nonlinear}
\bibinfo{author}{Vanhille, C.} \& \bibinfo{author}{Campos-Pozuelo, C.}
\newblock \bibinfo{title}{Nonlinear ultrasonic propagation in bubbly liquids: a
  numerical model}.
\newblock \emph{\bibinfo{journal}{Ultrasound in medicine \& biology}}
  \textbf{\bibinfo{volume}{34}}, \bibinfo{pages}{792--808}
  (\bibinfo{year}{2008}).

\bibitem{dai2013model}
\bibinfo{author}{Dai, X.} \& \bibinfo{author}{Gao, Z.}
\newblock \bibinfo{title}{From model, signal to knowledge: A data-driven
  perspective of fault detection and diagnosis}.
\newblock \emph{\bibinfo{journal}{IEEE Transactions on Industrial Informatics}}
  \textbf{\bibinfo{volume}{9}}, \bibinfo{pages}{2226--2238}
  (\bibinfo{year}{2013}).

\bibitem{reddy2019introduction}
\bibinfo{author}{Reddy, J.~N.}
\newblock \emph{\bibinfo{title}{Introduction to the finite element method}}
  (\bibinfo{publisher}{McGraw-Hill Education}, \bibinfo{year}{2019}).

\bibitem{ozicsik2017finite}
\bibinfo{author}{{\"O}zi{\c{s}}ik, M.~N.}, \bibinfo{author}{Orlande, H.~R.},
  \bibinfo{author}{Cola{\c{c}}o, M.~J.} \& \bibinfo{author}{Cotta, R.~M.}
\newblock \emph{\bibinfo{title}{Finite difference methods in heat transfer}}
  (\bibinfo{publisher}{CRC press}, \bibinfo{year}{2017}).

\bibitem{sattarzadeh2021real}
\bibinfo{author}{Sattarzadeh, S.}, \bibinfo{author}{Roy, T.} \&
  \bibinfo{author}{Dey, S.}
\newblock \bibinfo{title}{Real-time estimation of 2-d temperature distribution
  in lithium-ion pouch cells}.
\newblock \emph{\bibinfo{journal}{IEEE Transactions on Transportation
  Electrification}} \textbf{\bibinfo{volume}{7}}, \bibinfo{pages}{2249--2259}
  (\bibinfo{year}{2021}).

\bibitem{deng2005spectral}
\bibinfo{author}{Deng, H.}, \bibinfo{author}{Li, H.-X.} \&
  \bibinfo{author}{Chen, G.}
\newblock \bibinfo{title}{Spectral-approximation-based intelligent modeling for
  distributed thermal processes}.
\newblock \emph{\bibinfo{journal}{IEEE Transactions on Control Systems
  Technology}} \textbf{\bibinfo{volume}{13}}, \bibinfo{pages}{686--700}
  (\bibinfo{year}{2005}).

\bibitem{baker2000finite}
\bibinfo{author}{Baker, J.} \& \bibinfo{author}{Christofides, P.~D.}
\newblock \bibinfo{title}{Finite-dimensional approximation and control of
  non-linear parabolic pde systems}.
\newblock \emph{\bibinfo{journal}{International Journal of Control}}
  \textbf{\bibinfo{volume}{73}}, \bibinfo{pages}{439--456}
  (\bibinfo{year}{2000}).

\bibitem{chen2019dimension}
\bibinfo{author}{Chen, L.-Q.}, \bibinfo{author}{Li, H.-X.} \&
  \bibinfo{author}{Yang, H.-D.}
\newblock \bibinfo{title}{Dimension embedded basis function for spatiotemporal
  modeling of distributed parameter system}.
\newblock \emph{\bibinfo{journal}{IEEE Transactions on Industrial Informatics}}
  \textbf{\bibinfo{volume}{16}}, \bibinfo{pages}{5846--5854}
  (\bibinfo{year}{2019}).

\bibitem{wang2018sliding}
\bibinfo{author}{Wang, B.-C.} \& \bibinfo{author}{Li, H.-X.}
\newblock \bibinfo{title}{A sliding window based dynamic spatiotemporal
  modeling for distributed parameter systems with time-dependent boundary
  conditions}.
\newblock \emph{\bibinfo{journal}{IEEE Transactions on Industrial Informatics}}
  \textbf{\bibinfo{volume}{15}}, \bibinfo{pages}{2044--2053}
  (\bibinfo{year}{2018}).

\bibitem{wang2018incremental}
\bibinfo{author}{Wang, Z.} \& \bibinfo{author}{Li, H.-X.}
\newblock \bibinfo{title}{Incremental spatiotemporal learning for online
  modeling of distributed parameter systems}.
\newblock \emph{\bibinfo{journal}{IEEE Transactions on Systems, Man, and
  Cybernetics: Systems}} \textbf{\bibinfo{volume}{49}},
  \bibinfo{pages}{2612--2622} (\bibinfo{year}{2018}).

\bibitem{chen2020spatiotemporal}
\bibinfo{author}{Chen, L.}, \bibinfo{author}{Li, H.-X.} \&
  \bibinfo{author}{Yang, H.-D.}
\newblock \bibinfo{title}{Spatiotemporal modeling for distributed parameter
  system under sparse sensing}.
\newblock \emph{\bibinfo{journal}{Industrial \& Engineering Chemistry
  Research}} \textbf{\bibinfo{volume}{59}}, \bibinfo{pages}{16321--16329}
  (\bibinfo{year}{2020}).

\bibitem{ojo2020neural}
\bibinfo{author}{Ojo, O.} \emph{et~al.}
\newblock \bibinfo{title}{A neural network based method for thermal fault
  detection in lithium-ion batteries}.
\newblock \emph{\bibinfo{journal}{IEEE Transactions on Industrial Electronics}}
  \textbf{\bibinfo{volume}{68}}, \bibinfo{pages}{4068--4078}
  (\bibinfo{year}{2020}).

\bibitem{li2020novel}
\bibinfo{author}{Li, J.} \emph{et~al.}
\newblock \bibinfo{title}{A novel hybrid short-term load forecasting method of
  smart grid using mlr and lstm neural network}.
\newblock \emph{\bibinfo{journal}{IEEE Transactions on Industrial Informatics}}
  \textbf{\bibinfo{volume}{17}}, \bibinfo{pages}{2443--2452}
  (\bibinfo{year}{2020}).

\bibitem{zhang2022data}
\bibinfo{author}{Zhang, W.} \emph{et~al.}
\newblock \bibinfo{title}{Data-driven early warning strategy for thermal
  runaway propagation in lithium-ion battery modules with variable state of
  charge}.
\newblock \emph{\bibinfo{journal}{Applied Energy}}
  \textbf{\bibinfo{volume}{323}}, \bibinfo{pages}{119614}
  (\bibinfo{year}{2022}).

\bibitem{zhang2011review}
\bibinfo{author}{Zhang, J.} \& \bibinfo{author}{Lee, J.}
\newblock \bibinfo{title}{A review on prognostics and health monitoring of
  li-ion battery}.
\newblock \emph{\bibinfo{journal}{Journal of power sources}}
  \textbf{\bibinfo{volume}{196}}, \bibinfo{pages}{6007--6014}
  (\bibinfo{year}{2011}).

\bibitem{xiong2020research}
\bibinfo{author}{Xiong, R.}, \bibinfo{author}{Sun, W.}, \bibinfo{author}{Yu,
  Q.} \& \bibinfo{author}{Sun, F.}
\newblock \bibinfo{title}{Research progress, challenges and prospects of fault
  diagnosis on battery system of electric vehicles}.
\newblock \emph{\bibinfo{journal}{Applied Energy}}
  \textbf{\bibinfo{volume}{279}}, \bibinfo{pages}{115855}
  (\bibinfo{year}{2020}).

\bibitem{li2010modeling}
\bibinfo{author}{Li, H.-X.} \& \bibinfo{author}{Qi, C.}
\newblock \bibinfo{title}{Modeling of distributed parameter systems for
  applications—a synthesized review from time--space separation}.
\newblock \emph{\bibinfo{journal}{Journal of Process Control}}
  \textbf{\bibinfo{volume}{20}}, \bibinfo{pages}{891--901}
  (\bibinfo{year}{2010}).

\bibitem{qi2011time}
\bibinfo{author}{Qi, C.} \emph{et~al.}
\newblock \bibinfo{title}{Time/space-separation-based svm modeling for
  nonlinear distributed parameter processes}.
\newblock \emph{\bibinfo{journal}{Industrial \& engineering chemistry
  research}} \textbf{\bibinfo{volume}{50}}, \bibinfo{pages}{332--341}
  (\bibinfo{year}{2011}).

\bibitem{zhou2023data}
\bibinfo{author}{Zhou, Y.}, \bibinfo{author}{Deng, H.}, \bibinfo{author}{Li,
  H.-X.} \& \bibinfo{author}{Xie, S.-L.}
\newblock \bibinfo{title}{Data-driven real-time prediction of pouch cell
  temperature field under minimal sensing}.
\newblock \emph{\bibinfo{journal}{IEEE Transactions on Transportation
  Electrification}} \textbf{\bibinfo{volume}{9}}, \bibinfo{pages}{1034--1041}
  (\bibinfo{year}{2023}).

\bibitem{prautzsch2002bezier}
\bibinfo{author}{Prautzsch, H.}, \bibinfo{author}{Boehm, W.} \&
  \bibinfo{author}{Paluszny, M.}
\newblock \emph{\bibinfo{title}{B{\'e}zier and B-spline techniques}}
  Vol.~\bibinfo{volume}{6} (\bibinfo{publisher}{Springer},
  \bibinfo{year}{2002}).

\bibitem{wei2022two}
\bibinfo{author}{Wei, P.} \& \bibinfo{author}{Li, H.-X.}
\newblock \bibinfo{title}{Two-dimensional spatial construction for online
  modeling of distributed parameter systems}.
\newblock \emph{\bibinfo{journal}{IEEE Transactions on Industrial Electronics}}
  \textbf{\bibinfo{volume}{69}}, \bibinfo{pages}{10227--10235}
  (\bibinfo{year}{2022}).

\bibitem{shevade2000improvements}
\bibinfo{author}{Shevade, S.~K.}, \bibinfo{author}{Keerthi, S.~S.},
  \bibinfo{author}{Bhattacharyya, C.} \& \bibinfo{author}{Murthy, K. R.~K.}
\newblock \bibinfo{title}{Improvements to the smo algorithm for svm
  regression}.
\newblock \emph{\bibinfo{journal}{IEEE transactions on neural networks}}
  \textbf{\bibinfo{volume}{11}}, \bibinfo{pages}{1188--1193}
  (\bibinfo{year}{2000}).

\bibitem{she2019battery}
\bibinfo{author}{She, C.}, \bibinfo{author}{Wang, Z.}, \bibinfo{author}{Sun,
  F.}, \bibinfo{author}{Liu, P.} \& \bibinfo{author}{Zhang, L.}
\newblock \bibinfo{title}{Battery aging assessment for real-world electric
  buses based on incremental capacity analysis and radial basis function neural
  network}.
\newblock \emph{\bibinfo{journal}{IEEE Transactions on Industrial Informatics}}
  \textbf{\bibinfo{volume}{16}}, \bibinfo{pages}{3345--3354}
  (\bibinfo{year}{2019}).

\bibitem{liu2022transferred}
\bibinfo{author}{Liu, K.} \emph{et~al.}
\newblock \bibinfo{title}{A transferred recurrent neural network for battery
  calendar health prognostics of energy-transportation systems}.
\newblock \emph{\bibinfo{journal}{IEEE Transactions on Industrial Informatics}}
  \textbf{\bibinfo{volume}{18}}, \bibinfo{pages}{8172--8181}
  (\bibinfo{year}{2022}).

\bibitem{li2020battery}
\bibinfo{author}{Li, D.}, \bibinfo{author}{Zhang, Z.}, \bibinfo{author}{Liu,
  P.}, \bibinfo{author}{Wang, Z.} \& \bibinfo{author}{Zhang, L.}
\newblock \bibinfo{title}{Battery fault diagnosis for electric vehicles based
  on voltage abnormality by combining the long short-term memory neural network
  and the equivalent circuit model}.
\newblock \emph{\bibinfo{journal}{IEEE Transactions on Power Electronics}}
  \textbf{\bibinfo{volume}{36}}, \bibinfo{pages}{1303--1315}
  (\bibinfo{year}{2020}).

\end{thebibliography}

\end{document}